\title{LTLf+ and PPLTL+: Extending LTLf and PPLTL to Infinite Traces}
\author {%
Benjamin Aminof\textsuperscript{\rm 1},
Giuseppe De Giacomo\textsuperscript{\rm 1,2},
Sasha Rubin\textsuperscript{\rm 3},
Moshe Y.\ Vardi\textsuperscript{\rm 4}
}
\newtheorem{theorem}{Theorem}
\newtheorem{lemma}{Lemma}
\newtheorem{example}{Example}
\newtheorem{remark}{Remark}
\newtheorem{corollary}[theorem]{Corollary}
\newtheorem{proposition}{Proposition}
\newtheorem{definition}{Definition}
\newcommand{\DFA}{{DFA}\xspace}
\def\AP{{AP}}
\def\tf{\{\true,\false\}}
\newcommand\exptime{EXPTIME\xspace}
\newcommand{\limp}{\supset}
\DeclareMathOperator{\LTLU}{\mathbin{\mathsf{U}}}
\DeclareMathOperator{\LTLW}{\mathbin{\mathsf{W}}}
\DeclareMathOperator{\LTLR}{\mathbin{\mathsf{R}}}
\DeclareMathOperator{\LTLM}{\mathbin{\mathsf{M}}}
\DeclareMathOperator{\LTLX}{\mathbin{\mathsf{X}}}
\DeclareMathOperator{\LTLWX}{\mathbin{\mathsf{WX}}}
\DeclareMathOperator{\LTLS}{\mathbin{\mathsf{S}}}
\DeclareMathOperator{\LTLY}{\mathbin{\mathsf{Y}}}
\DeclareMathOperator{\LTLF}{\mathbin{\mathsf{F}}}
\DeclareMathOperator{\LTLG}{\mathbin{\mathsf{G}}}
\DeclareMathOperator{\LTLH}{\mathbin{\mathsf{H}}}
\DeclareMathOperator{\LTLO}{\mathbin{\mathsf{O}}}
\DeclareMathOperator{\nextX}{\mathsf{X}}
\DeclareMathOperator{\eventually}{\ensuremath{\mathsf{F}}\xspace}
\newcommand{\true}{\mathsf{true}}
\newcommand{\false}{\mathsf{false}}
\newcommand{\BOX}[1]{ [#1]}
\newcommand{\DIAM}[1]{\langle #1 \rangle}
\newcommand{\AUTA}{\mathcal{A}}
\newcommand{\AUTB}{\mathcal{B}}
\newcommand{\AUTC}{\mathcal{C}}
\definecolor{darkblue}{rgb}{0.0, 0.0, 0.55}
\definecolor{darkorange}{rgb}{1.0, 0.55, 0.0}
\newcommand\len[1]{|{#1}|}
\newcommand\emptystring{\varepsilon}
\newcommand\trace{\tau}
\renewcommand\inf{\textit{inf}}
\newcommand{\LTL}{\textrm{{LTL}}\xspace}
\newcommand{\LTLf}{\ensuremath{{\textrm{LTLf}}}\xspace}
\newcommand{\PPLTL}{\ensuremath{{\textrm{PPLTL}}}\xspace}
\newcommand{\LTLfplus}{\ensuremath{{\textrm{LTLf+}}}\xspace}
\newcommand{\PPLTLplus}{\ensuremath{{\textrm{PPLTL+}}}\xspace}
\newcommand{\LDLf}{\ensuremath{{\textrm{{LDLf}}}}\xspace}
\newcommand{\PPLDL}{\ensuremath{{\textrm{{PPLDL}}}}\xspace}
\newcommand\last{\textit{last}}
\newcommand\safe[1]{\ensuremath{\forall #1}}
\newcommand\guar[1]{\ensuremath{\exists #1}}
\newcommand\pers[1]{\ensuremath{\exists\forall #1}}
\newcommand\recu[1]{\ensuremath{\forall\exists #1}}
\newcommand\safeSymbol{\ensuremath{\forall}}
\newcommand\guarSymbol{\ensuremath{\exists}}
\newcommand\persSymbol{\ensuremath{\exists\forall}}
\newcommand\recuSymbol{\ensuremath{\forall\exists}}
\newcommand{\QeL}{\mathfrak{L}+}
\newcommand{\srnotes}[1]{\todo[inline,color=blue!10,caption={SR}]{\textbf{SR:} #1}}
\begin{document}

\maketitle




\begin{abstract}
We introduce LTLf+ and PPLTL+, two logics to express properties of infinite traces, that are based on the linear-time temporal logics LTLf and PPLTL on finite traces. LTLf+/PPLTL+ use levels of Manna and Pnueli’s LTL safety-progress hierarchy, and thus have the same expressive power as LTL. However, they also retain a crucial characteristic of the reactive synthesis problem for the base logics: the game arena for strategy extraction can be derived from deterministic finite automata (DFA). Consequently, these logics circumvent the notorious difficulties associated with determinizing infinite trace automata, typical of LTL reactive synthesis. We present DFA-based synthesis techniques for LTLf+/PPLTL+, and show that synthesis is 2EXPTIME-complete for LTLf+ (matching LTLf) and EXPTIME-complete for PPLTL+ (matching PPLTL).  Notably, while PPLTL+ retains the full expressive power of LTL, reactive synthesis is EXPTIME-complete instead of 2EXPTIME-complete. The techniques are also adapted to optimally solve satisfiability, validity, and model-checking, to get EXPSPACE-complete for LTLf+ (extending a recent result for the guarantee level using LTLf), and PSPACE-complete for PPLTL+.
\end{abstract}




\section{Introduction}
        
\emph{Reactive synthesis} is concerned with synthesizing programs (aka, strategies) for reactive computations
(e.g., processes, protocols, controllers, robots) in active environments \cite{PnueliR89,finkbeiner2016synthesis,EhlersLTV17}.
The basic techniques for reactive synthesis share several similarities with Model Checking, and are based on the connections between Logics, Automata, and Games
\cite{fijalkow2023games}. The most common specification language is possibly Linear Temporal Logic (\LTL)~\cite{Pnueli77}.

Reactive Synthesis for \LTL involves the following Steps: (1) having a specification $\varphi$ of the desired system behavior in \LTL,
in which one distinguishes controllable and uncontrollable variables; 
(2) extracting from the specification an equivalent automaton on infinite words, corresponding to the infinite traces satisfying $\varphi$;
(3) (differently from Model Checking) determinizing the automaton to obtain an arena for a game between the system and the environment;
(4) solving the game, by fixpoint computation, for an objective determined by the automaton's accepting condition (e.g., a parity objective for \LTL), yielding a strategy for the system that fulfills the original specification $\varphi$.

Model Checking is mature, {and many of its techniques may be exploited in Reactive Synthesis as well}, e.g., symbolic techniques based on Boolean encodings may be used to compactly represent the game arena and to compute fixpoints over it. 
{However, despite this,} Step (3) remains a major performance obstacle. For \LTL, this involves determinizing nondeterministic B\"uchi automata, which is notoriously difficult \cite{Vardi07}. This has held back the use of reactive synthesis in applications.  

Reactive synthesis is deeply related to Planning \cite{DR-IJCAI18,CamachoBM19}, and in particular to (strong) planning for temporally extended goals in fully observable nondeterministic domains \cite{Cimatti03,DBLP:journals/amai/BacchusK98,BacchusK00,CalvaneseGV02,BaierM06,BaierFM07,GereviniHLSD09}.
{A key characteristic of Planning is that} the system continuously receives a goal, ``thinks'' about how to achieve it, synthesizes a plan, executes the plan, and  repeats~\cite{GeBo13}. 
This suggests to focus on goal specifications that can be satisfied on finite traces.
%
%
Recently, this led to a shift in Reactive Synthesis to focus on logics on finite traces (instead of infinite traces), e.g., \LTLf~\cite{DegVa13,DegVa15}. 
The advantage of focusing on finite traces is that in Step (3) one can rely on (classic) automata operating on finite traces, including deterministic finite automata (DFA), and use known determinization algorithms with good practical performance. 
%
The development of \LTLf synthesis~\cite{DegVa15} has brought about scalable tools that are unprecedented in reactive synthesis \cite{ZTLPV17,bansal2020hybrid,DF2021,DeGiacomoFLVX022}.
Beside \LTLf, another finite-trace logic that is gaining popularity in AI is {\em Pure Past} \LTL (\PPLTL)~\cite{DeGiacomoSFR20,CimattiGGMT20,BonassiGFFGS23icaps,BonassiGFFGS23ecai,BonassiDGS24ecai}. This is a variant of \LTLf that sees the trace backwards and has the notable property that one can obtain a symbolic (i.e., factorized) DFA directly from the formula in linear time; moreover, while the size of the (non-symbolic) DFA corresponding to an \LTLf formula can be double-exponential in the size of the formula itself, the size of the DFA corresponding to a \PPLTL formula is at most a single-exponential in the size of the formula. 

Nevertheless, not all specifications of interest can be expressed {on finite traces}. {For example, the planning domain is an infinite-trace specification: the planning domain will continue to respond to actions (with preconditions satisfied) by producing its possibly nondeterministic effects, 
\emph{forever}.  Not to mention \emph{recurrence},   \emph{persistence}, \emph{reactivity}, and other properties typically used in Model Checking. When dealing with infinite traces, using (same variants of ) \LTL as the specification language is an obvious choice.}

\emph{Can we lift the DFA techniques at the base of the success story of \LTLf and \PPLTL synthesis to full \LTL?
In this paper we answer this question positively!}

To do so, we leverage the classic hierarchy of \LTL properties --- the \emph{safety-progress} hierarchy \cite{DBLP:conf/podc/MannaP89}.\footnote{
The hierarchy was introduced by \citeauthor{LichtensteinPZ85} in \citeyear{LichtensteinPZ85}, later described in detail by \citeauthor{DBLP:conf/podc/MannaP89} in \citeyear{DBLP:conf/podc/MannaP89} and in their books \cite{MannaPnueli92,MannaPnueli95,MannaPnueli10}; also, see the survey~\cite{PitermanP18}.} It consists of six classes of semantic properties, organized by inclusion. The bottom, first, level has, on the one hand, the \emph{safety properties} (that intuitively express that nothing bad
ever happens), and on the other the \emph{guarantee properties}, sometimes also called co-safety properties, (that express that something good eventually happens); the second level consists of the class of \emph{obligation properties}, obtained as positive Boolean combination of safety and guarantee properties; the third level contains, on the one hand, the \emph{recurrence properties} (that express that something good occurs infinitely often), and on the other the \emph{persistence properties} (that say that nothing bad occurs infinitely often); and the fourth level contains the \emph{reactivity properties}, which are obtained as positive Boolean combination of recurrence and persistence properties. Each property is semantically defined in terms of sets of finite traces, e.g., a set $F$ of finite-traces induces a basic safety (resp. progress) property that consists of an infinite trace iff every prefix (resp. all but finitely many prefixes) of the trace are in~$F$.
The reactivity properties contain all properties expressible in \LTL.\footnote{The hierarchy is not limited to \LTL, i.e., to properties that are expressible in first-order logic (FO) over infinite sequences~\cite{Kamp}, but extends to omega-regular properties, i.e., to monadic-second order logic (MSO) over infinite sequences. Indeed all the results we present here can be extended to omega-regular properties by substituting \LTLf (resp. \PPLTL) by its MSO-complete variant \LDLf (resp. \PPLDL) \cite{DegVa13}.}

We revisit Manna and Pnueli's hierarchy, and exploit it to define extensions of \LTLf and \PPLTL, which we call \LTLfplus and \PPLTLplus, that can express arbitrary \LTL properties on infinite traces. 
These new logics retain a crucial characteristic for reactive synthesis of their base logics: one can exploit the techniques for translating \LTLf and \PPLTL formulas into DFAs \cite{DegVa15,DeGiacomoSFR20}.  These DFAs combine in a simple product to form the game arena for strategy extraction of \LTLfplus/\PPLTLplus specifications. 
Naturally, the game objectives for \LTLfplus/\PPLTLplus go beyond the simple adversarial reachability for \LTLf/\PPLTL. In particular, we exploit a variation of the Emerson-Lei condition \cite{EmersonL87} for handling Boolean combinations, and the possibility of translating these conditions into parity conditions (typically used for \LTL) or to fixpoint computations~\cite{DBLP:conf/fossacs/HausmannLP24}.

We show that the worst-case complexity for synthesis of \LTLfplus (resp. \PPLTLplus) is the same as for the base logics \LTLf (resp. \PPLTL), i.e., 2EXPTIME-complete (resp. EXPTIME-complete). 
The EXPTIME-complete result for synthesis in \PPLTLplus is particularly interesting because, on the one hand, it shows that the exponential gap between \PPLTL and \LTLf \cite{DeGiacomoSFR20,BonassiGFFGS23icaps} is maintained when extended to handle the full safety-progress hierarchy; and, on the other hand, it gives one a logic with the same expressive power as \LTL but for which synthesis can be solved in EXPTIME instead of 2EXPTIME. Previous efforts to achieve reactive synthesis with exponential complexity focused on proper fragments of LTL~\cite{DBLP:journals/jlap/ArtecheH24}.

We also adapt our DFA-based techniques and establish that reasoning --- satisfiability, validity, and model-checking --- for \LTLfplus (resp. \PPLTLplus) is EXPSPACE-complete (resp. PSPACE-complete). The EXPSPACE-completeness result, which may appear surprising since satisfiability and model checking for \LTL are both PSPACE-complete~\cite{DBLP:reference/mc/2018}, in fact confirms and extends a recent EXPSPACE-hardness result for model checking the fragment of \LTLfplus limited to the guarantee class~\cite{DBLP:conf/atva/BansalLTVW23}. In other words, although \LTLfplus defines infinite-trace properties using explicit reference to finite-trace properties defined in \LTLf, it provides a computational advantage for synthesis but not for reasoning. Conversely, reasoning in \PPLTL has the same cost as reasoning in \LTL.



\section{Preliminaries}

For a non-negative integer $k$, let $[k] = \{1,2,\ldots,k\}$.

\subsubsection{Trace Properties}
Let $\Sigma$ be a finite  alphabet. If $S \subseteq \Sigma$, then $S^\omega$ (resp. $S^*$) is the set of infinite (resp. finite) sequences over $\Sigma$. The empty sequence is denoted $\emptystring$. Indexing sequences starts at $0$, and we write $\tau = \tau_0 \tau_1 \cdots$. For $0 \leq i < |\tau|$, write $\trace_{\leq i}$ for the prefix $\tau_0 \cdots \tau_i$, and $\trace_{\geq i}$ for the suffix $\trace_i \trace_{i+1}  \cdots$ of $\trace$. 
%
Let $AP$ be a finite non-empty set of \emph{atoms}. We assume that $AP$ is fixed, and do not measure its size in any of the complexities. A \emph{trace} is a non-empty finite or infinite sequence $\trace$ over $\Sigma=2^{\AP}$ (valuations of atoms); in particular, the empty sequence $\lambda$ is not a trace. 
The length of $\tau$ is denoted $|\tau| \in \mathbb{N} \cup \{\infty\}$. 
A \emph{finite-trace property} is a set of finite traces.  An \emph{infinite-trace property} is a set of infinite traces. A trace \emph{satisfies} a property $Z$ if it is in $Z$.
Let $\mathfrak{L}$ be a logic for representing infinite-trace (resp. finite-trace) properties.  The set of infinite (resp. finite) traces satisfying a formula $\varphi$ of $\mathfrak{L}$ is denoted $[\varphi]$.
An infinite-trace (resp. finite-trace) property $Z$ is \emph{definable} (aka, \emph{expressible}) in $\mathfrak{L}$ iff there is some formula $\varphi$ of $\mathfrak{L}$ such that $Z = [\varphi]$.

\subsubsection{Prefix Quantifiers}
%
For a finite-trace property $Z$, define four fundamental infinite-trace properties as follows: $\safe{Z}$ (resp. $\guar{Z}$) denotes the set of infinite traces $\tau$ such that every (resp. some) non-empty finite prefix of $\tau$ satisfies $Z$;
$\recu{Z}$ (resp. $\pers{Z}$) denotes the set of infinite traces $\tau$ such that infinitely many (resp. all but finitely many, aka almost all) non-empty finite prefixes of $\tau$ satisfies $Z$.

\subsubsection{Transition Systems} \label{sec:TS}


A \emph{nondeterministic transition system} $T = (\Sigma,Q,I,\delta)$ consists of a finite \emph{input alphabet} $\Sigma$ (typically, $\Sigma = 2^{\AP}$), a finite set $Q$ of \emph{states}, a set $I \subseteq Q$ of \emph{initial states}, and a \emph{transition relation} $\delta \subseteq Q \times \Sigma \times Q$. The \emph{size of $T$} is $|Q|$, the number of its states.  We say that $T$ is \emph{total} if for every  $q,a$ there exists $q'$ such that $(q,a,q') \in \delta$. Unless stated otherwise, nondeterministic transition systems are assumed to be total. In case $I = \{\iota\}$ is a singleton and $\delta$ is functional, i.e., for every $q,a$ there is a unique $q'$ such that $(q,a,q') \in \delta$, the transition system is called \emph{deterministic} instead of \emph{nondeterministic}; in this case we write $\iota$ instead of $I$, and $\delta(s,a)=s'$ instead of $(s,a,s') \in \delta$. 
A \emph{run} (aka \emph{path}) induced by the trace $\tau$ is a sequence $\rho = \rho_0 \rho_1 \cdots$ of states, where $\rho_0 \in I$ and $\rho_{i+1} \in \delta(\rho_i,\tau_i)$ for $0 \leq i < \len{\tau}$. We simply say that \emph{$\tau$ has the run $\rho$} and that \emph{$\rho$ is a run of $\tau$}. In a nondeterministic transition system $T$, a trace may have any number of runs (including none). If a trace $\tau$ has a run in $T$, we say that $T$ \emph{generates} $\tau$. 
%
%
%
For $i \in \{1,2\}$, let $T_i = (\Sigma,Q_i,\iota_i,\delta_i)$ be  deterministic transition systems over the same alphabet. The \emph{product} $T_1 \times T_2$ is the deterministic transition system $(\Sigma,Q',\iota',\delta'$) where $Q' = Q_1 \times Q_2$, $\iota' = (\iota_1,\iota_2)$, and $\delta((q_1,q_2),a) = (\delta_1(q_1,a),\delta_2(q_2,a))$. This naturally extends to a product of $n$-many systems. 
For an infinite run $\rho$, define $\inf(\rho) \subseteq Q$ to be the set of states $q \in Q$ such that $q = \rho_i$ for infinitely many $i$.


\subsubsection{Automata on finite traces} \label{sec:DA}

%
A \emph{finite automaton} is a pair $\AUTA = (T,F)$, where $T$ is a transition system and $F \subseteq Q$ is the set of \emph{final} states. The \emph{size of $\AUTA$} is the size $T$. If $T$ is nondeterministic, then $\AUTA$ is called a \emph{nondeterministic finite automaton (NFA)}; if $T$ is deterministic, then $\AUTA$ is called a \emph{deterministic finite automaton (DFA)}.
A finite trace is \emph{accepted} by $\AUTA$ if it has a run that ends in a state of $F$. The set of finite traces accepted by $\AUTA$ is called the \emph{language of $\AUTA$}, and is denoted $L(\AUTA)$. 
Two automata $\AUTA,\AUTB$ are \emph{equivalent} if $L(\AUTA) = L(\AUTB)$. 
%
We say that $\AUTA$ is \emph{equivalent} to an \LTLf/\PPLTL\ formula $\psi$ if $L(\AUTA) = [\psi]$.

\subsubsection{Games and Synthesis} \label{sec: games}

An \emph{arena} is a deterministic transition system $D = (\Sigma,Q,\iota,\delta)$ where $\Sigma = 2^{\AP}$ and $\AP$ is partitioned into $X \cup Y$ for some sets $X,Y$ of atoms. Elements of $2^Y$ (resp. $2^X$) are called \emph{environment moves} (resp. \emph{agent} moves).
An \emph{objective} $O$ over $D$ is a subset of $Q^\omega$. Elements of $O$ are said to \emph{satisfy $O$}. A \emph{game} is a pair $G = (D,O)$.
A \emph{strategy} is a function $\sigma:(2^Y)^* \to (2^X)$ that maps sequences of environment moves to agent moves (including the empty sequence, because the agent moves first).  An \emph{outcome} of $\sigma$ is an infinite trace $\tau$ over $2^{\AP}$ such that (i) $\tau_0 \cap X = \sigma(\emptystring)$, and (ii) for $i > 0$ we have $\tau_i \cap X = \sigma(\tau_0 \cap Y \cdot \tau_1 \cap Y  \cdots \tau_{i-1} \cap Y)$. A strategy $\sigma$ is \emph{winning} if for every outcome $\tau$, the run $\rho \in Q^\omega$ induced by the trace $\tau$ in $D$ satisfies $O$. 
The following computational problem is called \emph{game solving}: given $G = (D,O)$, decide if there is a winning strategy, and if so, to return a finite representation of one. 
Intuitively, in a game, the agent moves first, the environment responds, and this repeats producing an infinite trace $\tau$. The agent is trying to ensure that the run induced by $\tau$ satisfies the objective, while the environment is trying to ensure it does not.

\subsubsection*{Emerson-Lei automata and games}
An \emph{Emerson-Lei (EL) condition} over a set $Q$ of states is a triplet $(\Gamma, \lambda, B)$, where $\Gamma$ is a finite set of \emph{labels}, $\lambda: Q \to 2^\Gamma$ is a \emph{labeling function} that assigns to a state a (possibly empty) subset of labels, and $B: 2^\Gamma \to \tf$ is a Boolean function over the set $\Gamma$ (treated as variables).
For a state $q \in Q$ and a label $l \in \lambda(q)$, we say that $q$ \emph{visits} $l$.
We will sometimes (but not always) write $B$ as a Boolean formula over the set $\Gamma$ of variables, with the usual syntax and semantics, e.g., the formula $l_1 \land l_2 \land \lnot l_3$ denotes the Boolean function that assigns $\true$ to a set $Z \subseteq \Gamma$ iff $Z$ contains $l_1$ and $l_2$ but not $l_3$.
For a sequence $\rho \in Q^\omega$, we denote by $\inf_\lambda(\rho) = \bigcup \{\lambda(q) : q \in \inf(\rho)\}$, i.e., the set of labels that are visited infinitely many times by states on $\rho$. The condition $(\Gamma, \lambda, B)$ induces the objective $O = \{\rho \in Q^\omega :  B(\inf_\lambda(\rho)) = \true\}$, i.e., $\rho \in O$ if the set of labels that are visited infinitely often satisfies $B$; in this case we say that $\rho$ \emph{satisfies the Emerson-Lei condition}. 
An \emph{Emerson-Lei automaton} is a pair $\AUTA = (T,(\Gamma, \lambda, B))$, where $T$ is a transition system, and $(\Gamma, \lambda, B)$ is an Emerson-Lei condition over the states of $T$. If $T$ is deterministic then $\AUTA$ is called a \emph{deterministic Emerson-Lei automaton (DELA)}.
An \emph{Emerson-Lei (EL) game} is a DELA $(D,(\Gamma, \lambda, B))$ where $D$ is an arena.

\begin{theorem}\cite{DBLP:conf/fossacs/HausmannLP24} \label{thm: solving EL-games}
Emerson-Lei games can be solved in time polynomial in the size of the arena and exponential in the number of labels.
\end{theorem}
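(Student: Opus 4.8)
Since Theorem~\ref{thm: solving EL-games} is quoted from prior work, the plan is to reconstruct the standard argument that places the complexity exactly where claimed. The engine of every game-solving algorithm here is the one-step \emph{controllable predecessor} operator: for $S \subseteq Q$, let $\mathsf{CPre}(S) = \{ q \in Q : \exists x \in 2^X\ \forall y \in 2^Y,\ \delta(q, x \cup y) \in S \}$, the set of states from which the agent can force the play into $S$ in one round (recall the agent commits to the $X$-part of the letter and the environment responds with the $Y$-part, so the quantifier pattern is $\exists x\,\forall y$). This operator is monotone, and a single evaluation costs time polynomial in $|Q|$ (the quantification over $2^X,2^Y$ is bounded since $\AP$, and hence $X,Y$, is fixed). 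From $\mathsf{CPre}$ one builds the agent's attractor $\attr{S}$ toward $S$ by a least fixpoint, again in polynomial time.

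The first step is to read the EL condition $(\Gamma, \lambda, B)$ as a Muller condition: a run $\rho$ is winning for the agent iff $B(\inf_\lambda(\rho)) = \true$, so the winning family of infinitely-visited label sets is exactly $\{ I \subseteq \Gamma : B(I) = \true\}$. I would then solve the resulting Muller game by the recursive Zielonka/McNaughton procedure, adapted to the symbolic (multi-label) setting. The recursion branches on whether the full label set $\Gamma$ satisfies $B$: in each branch one selects a maximal label subset on which $B$ is classified oppositely, computes the attractor of the player who ``owns'' that subset toward the states carrying those labels, deletes that attractor, and recurses on the remaining subarena with the restricted condition. The recursion tree is precisely the \emph{Zielonka tree} of $B$, and correctness (the computed set is exactly the agent's winning region) follows from the classical Zielonka argument together with the finite-memory determinacy of Muller games.

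The complexity bookkeeping is where the stated bound comes from, and it is the main obstacle. Each node of the recursion performs only attractor and $\mathsf{CPre}$ computations, each polynomial in $|Q|$; hence the total time is (polynomial in the arena) $\times$ (number of nodes of the Zielonka tree of $B$). The Zielonka tree is bounded by $|\Gamma|!$ nodes, i.e., singly exponential in the number of labels, which yields the claim. The delicate points -- and the reason the Hausmann--Lehaut--Piterman result is genuinely nontrivial rather than a direct invocation of Zielonka -- are twofold: (i) the condition is given \emph{symbolically} by the Boolean function $B$ together with a labeling $\lambda$ in which states carry \emph{sets} of labels and $\inf_\lambda$ is a union, so the recursion must branch over label subsets rather than over the $2^{|\Gamma|}$ colours of a naively expanded Muller game; and (ii) the attractor deletions must be arranged so that the dependence on $|Q|$ stays \emph{polynomial} (one factor of $|Q|$ per arena operation) rather than accumulating into $|Q|^{O(|\Gamma|)}$, as a straightforward depth-$|\Gamma|$ nesting of fixpoints would give. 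The contribution I would be reconstructing is a direct fixpoint formulation of this recursion over $\mathsf{CPre}$, indexed by the Zielonka tree, that makes both points precise and achieves exactly the separation ``polynomial in $|Q|$, exponential in $|\Gamma|$.''
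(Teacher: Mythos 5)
First, note that the paper does not reprove this statement: it is quoted from Hausmann, Lehaut and Piterman, and the paper's own self-contained route to the same bound is the later ``label appearance record'' construction (Theorem~\ref{thm:EL to parity}), which turns the Emerson--Lei condition into a parity condition by annotating states with a permutation of $\Gamma$ and a pointer, producing a parity game with $|Q|\,k!\,k$ states and $2k$ priorities that is then handed to a parity solver. Your reconstruction instead follows the route of the cited work itself: a Zielonka-style recursion over attractors and a controllable-predecessor operator, indexed by the Zielonka tree of $B$. That is a legitimate and faithful choice of approach, and your setup (the $\exists x\,\forall y$ quantifier pattern in $\mathsf{CPre}$, reading the EL condition as the Muller family $\{I \subseteq \Gamma : B(I)=\true\}$, branching on whether $\Gamma$ satisfies $B$) is correct.

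However, the proof as written has a genuine gap, and it sits exactly at the point that makes the theorem nontrivial. Your complexity claim ``total time $=$ (polynomial in the arena) $\times$ (number of nodes of the Zielonka tree)'' is not what the classical Zielonka/McNaughton recursion delivers: at each node of the recursion the algorithm runs a loop that repeatedly computes an attractor, removes it, and re-recurses until stabilization, and this loop can iterate up to $|Q|$ times per node. Compounded over a recursion of depth up to $k$, this gives $|Q|^{O(k)}$, not $\mathrm{poly}(|Q|)\cdot 2^{O(k\log k)}$. You explicitly identify this as delicate point (ii) and then state that ``the contribution I would be reconstructing is a direct fixpoint formulation \ldots that makes both points precise'' --- but that formulation (a fixpoint equation system extracted from the Zielonka tree, solved with techniques that keep the $|Q|$-dependence at a fixed polynomial degree) is precisely the content of the cited result, and it is the one step you do not supply. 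So the argument is an accurate roadmap to the theorem rather than a proof of it. If you want a self-contained argument at the level of detail you are aiming for, the paper's parity route is easier to close: the appearance-record automaton has $N=|Q|\,k!\,k$ states and index $2k=O(\log N)$, so known parity-game algorithms finish in time polynomial in $N$, which is polynomial in $|Q|$ and exponential in $k$.
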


\section{Linear Temporal Logics on Finite Traces}

We base our infinite-trace logics on finite-trace logics. The latter should have two features: (1)  each formula in the logic can be converted into an  equivalent DFA; (2) it must be efficiently closed under Boolean operations, i.e., one should be able to obtain the negation of a formula in polynomial time. For example, regular expressions satisfy (1) but not (2). On the other hand, Linear Temporal Logic on Finite Traces (\LTLf)~\cite{DegVa13} and Pure Past Linear Temporal Logic (\PPLTL)~\cite{LichtensteinPZ85} satisfy both (1) and (2). We now recall these.
%
%

\subsubsection{\LTLf} The syntax is given by:
  $  \varphi ::= 
        p 
    \mid 
        \neg \varphi 
    \mid 
        \varphi \land \varphi 
    \mid 
        \varphi \LTLX \varphi 
        \mid 
        \varphi \LTLU \varphi
        $,  where $p \in AP$. 
Here $\LTLX$ (``next'') and $\LTLU$ (``until'') are the \emph{future operators}.  Common abbreviations include $\LTLWX \varphi = \lnot \LTLX \lnot \varphi$ (``weak next"). 
We interpret \LTLf formulas over finite traces. Intuitively, we evaluate starting in position $0$ of the trace, $\LTLX \varphi$ says that $\varphi$ holds in the next step, and $\varphi_1 \LTLU \varphi_2$ says that $\varphi_2$ holds eventually, and $\varphi_1$ holds at every point in time until then (the formal semantics of \LTLf is in the supplement).
In \LTLf, one can predicate about both ends of the finite trace: the {first instant} of the trace by avoiding temporal operators (as in \LTL), and  the {last instant} by using the abbreviation $\last := \lnot\LTLX\true$. 

\subsubsection{\PPLTL}
The syntax is given by:
  $  \varphi ::= 
        p 
    \mid 
        \neg \varphi 
    \mid 
        \varphi \land \varphi 
    \mid \varphi \LTLY \varphi 
    \mid \varphi \LTLS \varphi
    $, where $p \in AP$.
Here $\LTLY$ (``yesterday'') and $\LTLS$ (``since'') are the \emph{past operators}, analogues of ``next" and ``until", respectively, but in the past.
Common abbreviations include  $\LTLO \varphi = \true \LTLS \varphi$ (``(at least) once in the past''), $\LTLH \varphi = \lnot \LTLO \lnot \varphi$ (``historically'' or ''hitherto''), $\textit{first} = \lnot \LTLY \true$.
%
%
We interpret \PPLTL formulas over finite traces starting at the last position of the trace (the formal semantics of \PPLTL is in the supplement).
One can also predicate about both ends of the finite trace: the {last instant} of the trace by avoiding temporal operators,  and the {first instant} by using the abbreviation $\textit{first} := \lnot
\LTLY \true$.

Obviously these logics satisfy property (2) above. For property (1), we recall:
\begin{theorem}\label{thm:logic to FA}
\begin{enumerate}
\item There is an algorithm that converts a given \LTLf formula $\varphi$ into an equivalent NFA 
of size exponential in the size of $\varphi$~\cite{DegVa15}, and also into an equivalent DFA of size double-exponential in the size of $\varphi$.

\item There is an algorithm that converts a given \PPLTL formula $\varphi$ into an equivalent \DFA of size exponential in the size of $\varphi$, i.e., of size $2^{O(|\varphi|)}$~\cite{DeGiacomoSFR20}. 
\end{enumerate}
\end{theorem}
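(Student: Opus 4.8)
The statement collects two standard constructions, so the plan is to recall the automata-theoretic pipelines underlying the cited results and to indicate what has to be checked in each case.

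\textbf{Part 1 (\LTLf).} The plan is to go through alternating automata on finite words. First I would define, for a given $\LTLf$ formula $\varphi$, an alternating finite automaton whose state set is (essentially) the set of subformulas of $\varphi$, hence of size $O(|\varphi|)$. Its transition function is obtained by the usual one-step unfolding of the temporal operators --- reading $\LTLX$ as ``move to the next position'' and $\LTLU$ via its fixpoint expansion $\varphi_1 \LTLU \varphi_2 \equiv \varphi_2 \lor (\varphi_1 \land \LTLX (\varphi_1 \LTLU \varphi_2))$ --- together with the correct treatment of the end of the trace, so that strong next fails and weak next holds at the last position. I would then invoke the standard translation from alternating finite automata to nondeterministic ones, whose states are sets of automaton states; this incurs a single exponential blow-up, giving an NFA with $2^{O(|\varphi|)}$ states. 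Finally, determinizing this NFA by the subset construction yields an equivalent DFA with $2^{2^{O(|\varphi|)}}$ states, i.e.\ double-exponential in $|\varphi|$.

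\textbf{Part 2 (\PPLTL).} Here the plan exploits that past operators can be evaluated in a single forward pass. The key observation is that, for any subformula $\psi$ of $\varphi$, the truth of $\psi$ at position $i$ of a trace is a function only of the current symbol $\tau_i$ and the truth values of subformulas at the previous position $i-1$; this follows from the semantics of $\LTLY$ and from the expansion $\varphi_1 \LTLS \varphi_2 \equiv \varphi_2 \lor (\varphi_1 \land \LTLY (\varphi_1 \LTLS \varphi_2))$. I would therefore take the states of the automaton to be truth assignments to the subformulas of $\varphi$ (a subset of its closure), of which there are $2^{O(|\varphi|)}$; let the initial state encode the valuation forced at the first position (where every $\LTLY$ is false); let the deterministic transition update the valuation using the recurrence above; and mark a state as final exactly when $\varphi$ is assigned true. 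Because each step is a deterministic function of the previous valuation and the current letter, this is directly a DFA of single-exponential size, with no determinization step required.

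\textbf{Main obstacle.} For Part~1 the delicate point is the correctness of the finite-word semantics inside the alternating automaton --- getting the boundary behaviour at the last position right for strong versus weak next, and verifying the accepting condition --- after which the size bounds are routine. For Part~2 the crux is the determinism claim: one must verify that the forward recurrence really defines each subformula's truth value as a well-defined function of the preceding valuation and the current letter, with no circular dependence. I expect this determinism argument to be the main conceptual obstacle, since it is exactly the feature that lets $\PPLTL$ be compiled \emph{directly} into a deterministic automaton and is the source of the single-exponential (rather than double-exponential) bound that distinguishes the two parts of the statement.
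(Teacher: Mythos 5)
Your proposal matches the paper's own treatment: for \LTLf the paper (in the supplement, Remark~\ref{rem: LTLf to DFA}) likewise goes formula $\to$ AFA $\to$ NFA $\to$ DFA via the linear-time unfolding and two subset constructions, and for \PPLTL (Remark~\ref{rem: PP to DFA}) it likewise takes states to be sets of subformulas updated deterministically by the one-step recurrence, justified by exactly the observation you identify as the crux --- that the truth of a \PPLTL subformula at position $i$ depends only on the current letter and the valuation at position $i-1$. The only cosmetic difference is that the paper uses a separate fresh initial state whose outgoing transitions compute the first valuation, rather than baking that valuation into the initial state itself.
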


\begin{remark} \label{ref: succinct}
Although \LTLf and \PPLTL define the same set of finite trace properties (namely, the properties definable by FOL over finite traces, or equivalently by the star-free regular expressions~\cite{GPSS80,LichtensteinPZ85}), translating \LTLf into \PPLTL, and vice-versa, is at least exponential in the worst case, see~\cite{ArtaleGGMM23} and the discussions in~\cite{DeGiacomoSFR20,BonassiGFFGS23icaps,BonassiGFFGS23ecai,GeattiMR24}. 
\end{remark}

\section{\LTLfplus and \PPLTLplus} 

Based on \LTLf (resp. \PPLTL), we define the logic \LTLfplus (resp. \PPLTLplus) for specifying infinite-trace properties.
Fix a set $\AP$ of atomic predicates.


\subsubsection{Syntax} 

The syntax of \LTLfplus (resp. \PPLTLplus) is given by the following grammar:
\[
\Psi ::= \safe{\Phi} \mid \guar{\Phi} \mid \recu{\Phi} \mid \pers{\Phi} \mid \Psi \lor \Psi \mid \Psi \land \Psi \mid \lnot \Psi
\]
where $\Phi$ are formulas in \LTLf  (resp. \PPLTL) over $AP$. We use common abbreviations, e.g., $\Psi \limp \Psi'$ for $\lnot \Psi \lor
\Psi'$.
The formulas  $\safe{\Phi}, \guar{\Phi}, \recu{\Phi}, \pers{\Phi}$ are called \emph{infinite-trace formulas}, and the formulas $\Phi$ are called \emph{finite-trace formulas}.




\subsubsection{Semantics}
For an \LTL/\PPLTL formula $\Phi$, recall that we write $[\Phi] \subseteq (2^{\AP})^*$ for the set of finite
traces that satisfy $\Phi$. For an \LTLfplus/\PPLTLplus formula $\Psi$ let $[\Psi] \subseteq
(2^{\AP})^\omega$ denote the set of infinite-trace properties, defined recursively as follows: 
\begin{itemize}
    \item $[\Psi \lor \Psi'] = [\Psi] \cup [\Psi']$, $[\Psi \land \Psi'] = [\Psi] \cap [\Psi']$, and $[\lnot \Psi] = (2^{\AP})^\omega \setminus [\Psi]$;
    \item $[\safe{\Phi}] = \safe{[\Phi]}$, $[\guar{\Phi}] = \guar{[\Phi]}$, $[\recu{\Phi}] = \recu{[\Phi]}$, and $[\pers{\Phi}] = \pers{[\Phi]}$.
\end{itemize}

In words, the Boolean operations are handled as usual, and $[\safe{\Phi}]$ (resp. $[\guar{\Phi}]$) denotes the set of infinite traces $\tau$ such that every (resp. some) non-empty finite prefix of $\tau$ satisfies $\Phi$, and $[\recu{\Phi}]$ (resp. $[\pers{\Phi}]$) denotes the set of infinite traces $\tau$ such that infinitely many (resp. all but finitely many) non-empty finite prefixes of $\tau$ are in $\Phi$. Note that $\safe{\Phi}$ and $\guar{\Phi}$ are dual since $\safe{\Phi} \equiv \lnot\guar{\lnot \Phi}$, as are $\recu{\Phi}$ and $\pers{\Phi}$ since $\recu{\Phi} \equiv
\lnot\pers{\lnot \Phi}$.
We write $\tau \models \Psi$ to mean that $\tau \in [\Psi]$.

In the terminology of \cite{DBLP:conf/podc/MannaP89}: $[\safe{\Phi}]$ is a \emph{safety} property; $[\guar{\Phi}]$ is a \emph{guarantee} property; $[\pers{\Phi}]$ is a \emph{persistence} property; $[\recu{\Phi}]$ is a \emph{recurrence} property; and $[\Psi]$ is a \emph{reactivity} property. 

\begin{remark} \label{rem: shorthands} 
To predicate about the initial state we can include
propositional formulas $\phi$ as abbreviations: for $\safe{\phi}$ in \LTLfplus, and for 
\safe{(\LTLH (\textit{first} \limp \phi))} in \PPLTLplus.
\end{remark}



\begin{remark}
    For \PPLTLplus we could use $\LTLG$ instead of $\forall$ and $\LTLF$
    instead of $\exists$, without changing the meaning of the \PPLTLplus formulas. 
    This would match the syntax used for safety-progress
    hierarchy in \cite{DBLP:conf/podc/MannaP89}.
\end{remark}



\subsubsection{Applications}
We illustrate \LTLfplus and \PPLTLplus with examples from planning.
Planning domains \cite{GeBo13}, non-Markovian domains \cite{Gabaldon11}, and safety
properties more generally, can all be expressed with formulas of the form
$\safe{\Phi}$, see also \cite{DBLP:conf/eumas/AminofGSFRZ23}. For example --- below $\phi_{x}$ are propositional formulas --- conditional effect axioms can be expressed
in \LTLfplus and \PPLTLplus by 
$\safe{\LTLG(\phi_{c} \limp (\LTLX a \limp \LTLX \phi_e))}$ and $
\safe{\LTLH(\LTLY\phi_{c} \limp (a \limp \phi_e))}$; 
Frame axioms can be expressed as
$\safe{\LTLG(\phi_{c} \land \LTLX a \limp (p \equiv \LTLX p))}$ and $\safe{\LTLH(\LTLY\phi_{c} \land a \limp (\LTLY p \equiv p))}$;
Initial conditions $\phi_{init}$ can be expressed using the shorthand in Remark~\ref{rem: shorthands};
Fairness of effects, e.g., assuming that action $a$ under condition
$\phi_c$ can effect two effects $\phi_{e_1}$ or $\phi_{e_2}$, can be expressed as 
$\recu{\LTLF (\phi_c \land \LTLX(a \land \last))} \limp \bigwedge_i \recu{\LTLF (\phi_c \land \LTLX(a \land \last \land \phi_{e_i}))}$ and 
$\recu{((\LTLY \phi_c) \land a )} \limp \bigwedge_i \recu{((\LTLY \phi_c) \land a  \land \phi_{e_i})}$.
Also, arbitrary \LTLf temporally extended goals $\Phi$~\cite{DegVa13} can be expressed using formulas of the
form $\guar{\Phi}$, e.g., eventually reaching a state where
$\phi_{g}$ holds while guaranteeing to maintain a safety condition
$\phi_{s}$ can be expressed as
$\guar{(\LTLF\phi_g \land \LTLG\phi_s) }$ and $\guar{(\phi_g\land \LTLH\phi_s)}$.\footnote{
Although in these examples the sizes of the \LTLf formulas and
corresponding \PPLTL are linearly related, this does not generalize to arbitrary formulas; see Remark~\ref{ref: succinct}.}

\subsection{Expressive Power}

%
%

The following says that the logics \PPLTLplus, \LTLfplus,  and \LTL have the same expressive power.

\begin{theorem}
    The logics \LTLfplus,  \PPLTLplus, and \LTL define the same infinite-trace properties.
\end{theorem}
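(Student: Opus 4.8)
The plan is to prove the two nontrivial equivalences \LTLfplus $\equiv$ \PPLTLplus and \LTLfplus $\equiv$ \LTL; transitivity then yields all three. The equivalence \LTLfplus $\equiv$ \PPLTLplus is almost immediate: the semantics of a ``$+$'' formula $\Psi$ depends on its finite-trace subformulas $\Phi$ only through the finite-trace property $[\Phi]$, and the outer Boolean / prefix-quantifier structure is identical in both logics. Since \LTLf and \PPLTL define exactly the same finite-trace properties (the star-free / FO-definable ones; see Remark~\ref{ref: succinct}), replacing each \LTLf subformula by an equivalent \PPLTL one (and vice versa) preserves the denotation $[\Psi]$. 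Hence it suffices to establish \LTLfplus $\equiv$ \LTL, which I do by two inclusions, working throughout with the \PPLTL (pure-past) representation of the finite-trace formulas.

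For \LTLfplus $\subseteq$ \LTL, first rewrite every finite-trace subformula as an equivalent \PPLTL formula $\varphi$. The key observation is that a pure-past formula $\varphi$ is syntactically an \LTL formula, and evaluating it at position $i$ of an infinite trace $\tau$ yields exactly the truth value of $\varphi$ (as a \PPLTL formula) on the finite prefix $\tau_{\leq i}$ at its last instant, since past operators inspect only positions $\leq i$. As $i$ ranges over $\mathbb{N}$, the prefixes $\tau_{\leq i}$ range over all non-empty finite prefixes of $\tau$. Consequently $[\safe{\Phi}] = [\LTLG \varphi]$, $[\guar{\Phi}] = [\LTLF \varphi]$, $[\recu{\Phi}] = [\LTLG\LTLF \varphi]$, and $[\pers{\Phi}] = [\LTLF\LTLG \varphi]$, where $\LTLG,\LTLF$ are the infinite-trace operators and $\varphi$ is the pure-past equivalent. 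Since \LTL is closed under $\lor,\land,\lnot$, an easy induction on the structure of the \LTLfplus formula produces an equivalent \LTL formula.

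For \LTL $\subseteq$ \LTLfplus, I would invoke the Manna--Pnueli safety--progress hierarchy: every \LTL-definable property lies in the reactivity class, i.e.\ is a finite Boolean combination of recurrence and persistence properties, each of the form $[\LTLG\LTLF \varphi]$ or $[\LTLF\LTLG \varphi]$ with $\varphi$ pure-past~\cite{DBLP:conf/podc/MannaP89}. Reading the computation of the previous paragraph backwards, $\LTLG\LTLF \varphi$ is exactly $\recu{\varphi}$ and $\LTLF\LTLG \varphi$ is exactly $\pers{\varphi}$ when $\varphi$ is viewed as a \PPLTL finite-trace formula; taking the same Boolean combination in \PPLTLplus (hence, after translating the subformulas, in \LTLfplus) yields an equivalent formula. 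The main obstacle is this last inclusion: it rests entirely on the normal-form theorem that every \LTL property is a Boolean combination of $\LTLG\LTLF$- and $\LTLF\LTLG$-formulas over pure-past kernels, which is the substantive content of the safety--progress hierarchy. I would state that result precisely and take care that the kernels $\varphi$ it produces are genuinely pure-past (equivalently, \PPLTL/\LTLf-definable finite-trace properties), so that they can legally serve as finite-trace arguments $\Phi$; the remaining steps are routine.
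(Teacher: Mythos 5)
Your proof is correct and follows essentially the same route as the paper: both reduce \LTLfplus to \PPLTLplus via the expressive equivalence of the base finite-trace logics, and both hinge on Manna--Pnueli's result that \LTL coincides with the reactivity class, i.e.\ Boolean combinations of $\LTLG\LTLF$/$\LTLF\LTLG$ over pure-past kernels. Your explicit translations $[\safe{\Phi}]=[\LTLG\varphi]$, $[\guar{\Phi}]=[\LTLF\varphi]$, $[\recu{\Phi}]=[\LTLG\LTLF\varphi]$, $[\pers{\Phi}]=[\LTLF\LTLG\varphi]$ just spell out the correspondence that the paper leaves implicit in its appeal to the reactivity normal form.
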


\begin{proof}
    For $\mathbb{Q} \in \{\safeSymbol,\guarSymbol,\recuSymbol,\persSymbol\}$, 
    write $\mathbb{Q}\PPLTL$ for the fragment of \PPLTL consisting of formulas of the form $\mathbb{Q}\Phi$ for $\Phi$ in \PPLTL. 
Recall from~\cite{DBLP:conf/podc/MannaP89} that: a \emph{reactivity formula} is defined as a Boolean combination of formulas from $\recu{\PPLTL}$, and that every formula in $\safe{\PPLTL}, \guar{\PPLTL}, \pers{\PPLTL}$ is equivalent to a reactivity formula.\footnote{This equivalence may involve a blowup, but this is not relevant to understand expressive power.} Thus, \PPLTLplus defines the same properties as the reactivity formulas. Also, recall from~\cite{DBLP:conf/podc/MannaP89} that 
the reactivity formulas define the same infinite-trace properties as \LTL.
Finally, use the fact that \PPLTL and \LTLf define the same finite-trace properties~\cite{DeGiacomoSFR20}.
\end{proof}

The \LTLfplus (resp. \PPLTLplus) formulas in 
\emph{positive normal-form (PNF)} are given by the following grammar:
$
\Psi ::= \safe{\Phi} \mid \guar{\Phi} \mid \recu{\Phi} \mid \pers{\Phi} \mid \Psi \lor \Psi \mid \Psi \land \Psi
$
where $\Phi$ are formulas in \LTLf  (resp. \PPLTL) over $AP$.
We can convert, in linear time, a given \LTLfplus (resp.\ \PPLTLplus) formula into one that is in positive normal-form, in the usual way, by pushing negations into the finite-trace formulas $\Phi$.

\section{Reactive Synthesis}
The synthesis problem for \LTL was introduced in~\cite{PnueliR89}.
In this section we define and study the synthesis problem for \PPLTLplus and \LTLfplus. 
\begin{definition}
The \emph{$\LTLfplus$ synthesis problem (resp. \PPLTLplus synthesis problem)} asks, given $\Psi$ in \LTLfplus (resp. \PPLTLplus),  to decide if there is a strategy such that every outcome of $\sigma$ satisfies $\Psi$ (such a strategy is said to \emph{enforce} $\Psi$), and, if so, to return a finite representation of such a strategy. 
\end{definition}

We solve the \LTLfplus/\PPLTLplus synthesis problem with an automata-theoretic approach based on DFAs, in a 4-step \textbf{Synthesis Algorithm}, as follows.
%

For convenience, we begin with a formula $\Psi$ in positive normal-form. The formula $\Psi$ can be thought of as a Boolean formula $\widehat{\Psi}$ (i.e., without negations) over a set of the form $[k]$ (for some $k$); thus $\Psi$ can be formed from $\widehat{\Psi}$ by replacing, for every $i \in [k]$, every occurrence of $i$ in $\widehat{\Psi}$ by a certain infinite-trace formula of the form $\mathbb{Q}_i \Phi_i$, where $\mathbb{Q}_i \in \{\guarSymbol,\safeSymbol,\persSymbol,\recuSymbol\}$.

\subsubsection{Step 1.}
Convert each finite-trace formula $\Phi_i$ into an equivalent DFA $\AUTA_i = (D_i,F_i)$ as in Theorem~\ref{thm:logic to FA}.\footnote{One only has to do this conversion once for each finite-trace formula, even if it appears in multiple infinite-trace formulas.} Say $D_i = (\Sigma,Q_i,\iota_i,\delta_i)$. 
We assume, without loss of generality, that the initial state has no incoming transitions. 

\subsubsection{Step 2.} For each $i \in [k]$, build a transition system $D'_i = (\Sigma,Q_i,\iota_i,\delta'_i)$, a set $F'_i \subseteq Q_i$, and an objective $O_i$ over $Q_i$:
\begin{enumerate}

\item Say $\mathbb{Q}_i = \recuSymbol$. Let $\delta'_i = \delta_i$, $F'_i = F_i$, and $O_i = \{\rho : \inf(\rho) \cap F'_i \neq \emptyset\}$.

\item Say $\mathbb{Q}_i = \persSymbol$. Let $\delta'_i = \delta_i$, $F'_i = F_i$, and $O_i  = \{ \rho : \inf(\rho) \cap (Q \setminus F'_i) = \emptyset\}$.

\item Say $\mathbb{Q}_i = \safeSymbol$. If $\iota_i$ is not a final state, make it into a final state, i.e., define $F'_i = F_i \cup \{\iota_i\}$. Let $\delta'_i$ be like $\delta_i$ except that every non-final state is a sink, i.e., define $\delta'_i(q,a)$ to be $q$ if $q \not \in F'_i$, and otherwise to be $\delta_i(q,a)$. Let $O_i = \{\rho :  \inf(\rho) \cap F'_i \neq \emptyset\}$.

\item Say $\mathbb{Q}_i = \guarSymbol$. If $\iota_i$ is a final state, make it into a non-final state, i.e., define $F'_i = F_i \setminus \{\iota_i\}$. Let $\delta'_i$ be like $\delta_i$ except that every final state $q$ is a sink, i.e., define $\delta'_i(q,a)$ to be $q$ if $q \in F'_i$, and otherwise to be $\delta_i(q,a)$.  Let $O_i =  \{\rho : \inf(\rho) \cap F'_i \neq \emptyset\}$.
\end{enumerate}

We get the following from these definitions (the proof is in the supplement):\footnote{We make two remarks about the cases $\mathbb{Q} \in \{\safeSymbol,\guarSymbol\}$: by our assumption that the initial state has no incoming transitions, redefining whether the initial state is a final state or not does not change the set of non-empty strings accepted by the DFA $\AUTA_i$; we could have defined either or both these cases to have an objective like the $\persSymbol$ case (instead of the $\recuSymbol$ case) since seeing a sink is the same as seeing it infinitely often which is the same as seeing it from some point on.}

\begin{proposition} \label{prop: quantified DFA}
An infinite trace satisfies $\mathbb{Q}_i \Phi_i$ iff the run it induces in $D'_i$ satisfies the objective $O_i$.
\end{proposition}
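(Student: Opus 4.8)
The plan is to fix an infinite trace $\tau$ and reason about the unique run it induces, first in the DFA $D_i$ and then in the modified system $D'_i$. The single fact underlying every case is the standard DFA acceptance characterization: if $\rho = \rho_0 \rho_1 \cdots$ is the run of $\tau$ in $D_i$ (so $\rho_0 = \iota_i$ and $\rho_{m+1} = \delta_i(\rho_m,\tau_m)$), then for each $m \geq 1$ the non-empty prefix $\tau_0 \cdots \tau_{m-1}$ satisfies $\Phi_i$ iff $\rho_m \in F_i$, since $L(\AUTA_i) = [\Phi_i]$. I would record this correspondence once and translate each prefix quantifier into a statement about how often $\rho$ visits $F_i$: $\guar{\Phi_i}$ means some $\rho_m \in F_i$ with $m \geq 1$; $\safe{\Phi_i}$ means every such $\rho_m \in F_i$; $\recu{\Phi_i}$ means infinitely many $\rho_m \in F_i$; and $\pers{\Phi_i}$ means all but finitely many $\rho_m \in F_i$.

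I would then split into the two easy cases and the two subtle ones. For $\mathbb{Q}_i \in \{\recuSymbol,\persSymbol\}$ the construction leaves the transition system unchanged ($D'_i = D_i$, $F'_i = F_i$), so the run in $D'_i$ is exactly $\rho$, and I only need the elementary finite-state fact that, over a finite state set, ``infinitely many indices $m$ have $\rho_m \in F_i$'' is equivalent to ``$\inf(\rho) \cap F_i \neq \emptyset$''. Applying this, together with its complementary form, matches $\recu{\Phi_i}$ with $O_i = \{\rho : \inf(\rho) \cap F'_i \neq \emptyset\}$ and $\pers{\Phi_i}$ with $O_i = \{\rho : \inf(\rho) \cap (Q \setminus F'_i) = \emptyset\}$.

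The main work is in the cases $\mathbb{Q}_i \in \{\safeSymbol,\guarSymbol\}$, where $D'_i$ differs from $D_i$: non-final states (resp. final states) are turned into sinks, and the finality of $\iota_i$ is flipped. Here I would let $\rho'$ denote the run of $\tau$ in $D'_i$ and argue that $\rho'$ agrees with $\rho$ up to a first critical index and is frozen thereafter. For safety, let $m \geq 1$ be the least index with $\rho_m \notin F_i$ (if none exists, all visited states lie in $F'_i = F_i \cup \{\iota_i\}$, transitions are unchanged, so $\rho' = \rho$ and $O_i$ holds). Every state strictly before $m$ lies in $F'_i$, so the sink redefinition has not yet fired and $\rho'_j = \rho_j$ for $j \leq m$; then $\rho'_m = \rho_m$ is a sink, so $\inf(\rho') = \{\rho_m\}$, and it remains to check $\rho_m \notin F'_i$. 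The guarantee case is dual, taking $m$ to be the least index with $\rho_m \in F_i$ and freezing on the first final state.

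The crux, and the step I expect to be the main obstacle to state cleanly, is exactly this last check, which is where the standing assumption that $\iota_i$ has no incoming transitions is used. Because of that assumption, $\rho_m \neq \iota_i$ for every $m \geq 1$, so flipping the finality of $\iota_i$ is invisible to the critical state: in the safety case $\rho_m \notin F_i$ and $\rho_m \neq \iota_i$ give $\rho_m \notin F'_i$, so $\inf(\rho') \cap F'_i = \emptyset$ and the objective fails, as required; in the guarantee case $\rho_m \in F_i$ and $\rho_m \neq \iota_i$ give $\rho_m \in F'_i$, so $\inf(\rho') \cap F'_i \neq \emptyset$ and the objective holds. Without the no-incoming-transitions assumption, $\rho_m$ could coincide with $\iota_i$ and the re-designation of $\iota_i$'s finality would misclassify the critical state; verifying that this cannot happen, together with the bookkeeping that redefining $\iota_i$'s finality does not alter acceptance of non-empty prefixes, is the one place requiring genuine care.
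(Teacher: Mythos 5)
Your proof is correct and follows essentially the same route as the paper's: translate each prefix quantifier into a statement about visits of the $D_i$-run to $F_i$ via the DFA correspondence, observe that the $\recuSymbol$/$\persSymbol$ cases are immediate since $D'_i=D_i$, and handle $\safeSymbol$/$\guarSymbol$ by noting that the runs in $D_i$ and $D'_i$ agree up to the first critical (non-final, resp.\ final) state, after which $D'_i$ freezes in a sink, with the no-incoming-transitions assumption guaranteeing that flipping the finality of $\iota_i$ never misclassifies that critical state. Your explicit tracking of the least critical index $m$ matches the paper's ``taking the least such $n$'' step, so no substantive difference in approach.
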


Call a state $q_i \in Q_i$ \emph{marked} if either (a) $\mathbb{Q}_i \neq \persSymbol$ and $q_i \in F'_i$, or (b) $\mathbb{Q}_i = \persSymbol$ and 
$q_i \in Q_i \setminus F'_i$. Thus, in case $\mathbb{Q}_i \neq \persSymbol$, the objective $O_i$ says that some marked state is seen infinitely often, and in case $\mathbb{Q}_i = \persSymbol$ the objective $O_i$ says that no marked state is seen infinitely often.

\subsubsection{Step 3.} Build the product transition system $D = \prod_{i \in [k]} D'_i$. Let $Q$ be the state set of $D$. Define an Emerson-Lei condition $(\Gamma,\lambda,B)$ over $Q$ as follows: $\Gamma = [k]$ (i.e., the labels are the component numbers), $\lambda(q) = \{i \in [k] : q_i \textrm{ is marked}\}$ (i.e., a state is labeled by the numbers of those components that are in marked states), and the Boolean formula $B$ is formed from the Boolean formula $\widehat{\Psi}$ by simultaneously replacing, for every $i$ with $\mathbb{Q}_i = \persSymbol$, every occurrence of $i$  by $\lnot i$. Define the deterministic EL-automaton $\AUTA_\Psi = (D,(\Gamma,\lambda,B))$.


\begin{example}
Say $\Psi = \recu{\Phi_1} \lor \pers{\Phi_2}$. Then the label set is $\Gamma = \{1,2\}$. Let $\tau$ be an infinite trace, let $\rho$ be the induced run in $D$, and let $\rho^i$ be the induced run in $D'_i$. Then, by Proposition~\ref{prop: quantified DFA}, $\tau \models \Psi$ iff $\rho^1$ satisfies $O_1$ or $\rho^2$ satisfies $O_2$. By definition of $O_i$, this is iff $\rho^1$ sees a marked state of $D'_1$ infinitely often or it is not the case that $\rho^2$ sees a marked state of $D'_2$ infinitely often. And this is iff the set of labels $i \in \{1,2\}$ for which $\rho^i$ sees a marked state infinitely often, satisfies the Boolean formula $B := 1 \lor \lnot 2$.
\end{example}

More generally, we get the following Proposition (the proof is in the supplement):
\begin{proposition}
 \label{prop: product} 
The formula $\Psi$ is equivalent to the EL-automaton $\AUTA_\Psi$.
\end{proposition}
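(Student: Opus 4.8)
The plan is to show, for an arbitrary infinite trace $\tau$, that $\tau \models \Psi$ holds if and only if the run $\rho$ induced by $\tau$ in the product $D$ satisfies the Emerson-Lei condition $(\Gamma,\lambda,B)$; since $\AUTA_\Psi$ is deterministic, this is exactly the claim $[\Psi] = L(\AUTA_\Psi)$. So fix $\tau$, let $\rho$ be its run in $D$, and for each $i \in [k]$ let $\rho^i$ be its run in $D'_i$. By the definition of the product transition system, $\rho^i$ is the projection of $\rho$ onto the $i$-th coordinate, and it is precisely the run induced by $\tau$ in $D'_i$.

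The first key step is to compute $\inf_\lambda(\rho)$. I claim that for each $i \in [k]$, we have $i \in \inf_\lambda(\rho)$ if and only if $\rho^i$ visits some marked state of $D'_i$ infinitely often. One direction is immediate: if a state $q \in \inf(\rho)$ has $q_i$ marked, then $q_i$ occurs infinitely often in $\rho^i$. For the converse --- the only place needing care --- suppose a marked state $p$ of $D'_i$ occurs infinitely often in $\rho^i$; since $D$ has finitely many states, among the infinitely many positions $t$ with $\rho^i_t = p$ some full product state $q$ with $q_i = p$ recurs infinitely often, by pigeonhole, so $q \in \inf(\rho)$ and $i \in \lambda(q)$, giving $i \in \inf_\lambda(\rho)$. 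Write $J := \inf_\lambda(\rho)$.

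Next I would translate membership in $J$ into satisfaction of the objectives $O_i$, using the characterisation of $O_i$ via marked states recorded just before Proposition~\ref{prop: quantified DFA}: when $\mathbb{Q}_i \neq \persSymbol$, $O_i$ requires a marked state seen infinitely often, so $\rho^i$ satisfies $O_i$ iff $i \in J$; when $\mathbb{Q}_i = \persSymbol$, $O_i$ requires no marked state seen infinitely often, so $\rho^i$ satisfies $O_i$ iff $i \notin J$. Combining with Proposition~\ref{prop: quantified DFA}, which gives $\tau \models \mathbb{Q}_i \Phi_i$ iff $\rho^i$ satisfies $O_i$, I obtain $\tau \models \mathbb{Q}_i \Phi_i \Leftrightarrow i \in J$ for non-persistence components and $\tau \models \mathbb{Q}_i \Phi_i \Leftrightarrow i \notin J$ for persistence components.

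Finally I would conclude by comparing Boolean evaluations. By the semantics of PNF, $\tau \models \Psi$ iff $\widehat{\Psi}$ is $\true$ under the assignment $v(i) := [\tau \models \mathbb{Q}_i \Phi_i]$, whereas $\rho$ satisfies the EL condition iff $B(J) = \true$, i.e.\ iff $B$ is $\true$ under $w(i) := [i \in J]$. Since $B$ arises from $\widehat{\Psi}$ by replacing $i$ with $\lnot i$ exactly at the persistence components, a per-leaf check finishes the argument: at a non-persistence $i$ the occurrence of $i$ in $B$ contributes $w(i) = [i \in J] = v(i)$, while at a persistence $i$ the occurrence of $\lnot i$ in $B$ contributes $\lnot w(i) = [i \notin J] = v(i)$. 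Thus the two formulas agree at every leaf and hence, by induction on their shared Boolean structure, evaluate identically, giving $\tau \models \Psi$ iff $\rho$ satisfies $(\Gamma,\lambda,B)$. I expect the main obstacle to be purely bookkeeping --- keeping the persistence-versus-non-persistence casework aligned across the objectives $O_i$ and the substitution defining $B$ --- together with the short pigeonhole step for the converse direction in computing $\inf_\lambda(\rho)$.
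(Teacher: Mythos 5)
Your proof is correct and follows essentially the same route as the paper: both arguments rest on Proposition~\ref{prop: quantified DFA} to reduce satisfaction of each $\mathbb{Q}_i\Phi_i$ to the component objective $O_i$, recast each $O_i$ in terms of marked states seen infinitely often (with the $\lnot i$ flip at persistence components), and then match the Boolean evaluation of $\widehat{\Psi}$ against $B$ on $\inf_\lambda(\rho)$. The only difference is presentational: the paper delegates the product/Boolean-combination step to its DELA closure proposition, whereas you inline it, including the short pigeonhole argument showing $i \in \inf_\lambda(\rho)$ iff $\rho^i$ visits a marked state infinitely often --- a step the paper leaves implicit in the correctness of the product construction.
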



\subsubsection{Step 4.} Solve the Emerson-Lei game $\AUTA_\Psi$. By Theorem~\ref{thm: solving EL-games}, this can be done in time polynomial in the size of the arena and exponential in the number of labels. 

This completes the description of the algorithm.

Since, for an \LTLfplus (resp. \PPLTLplus) formula $\Psi$, the size of the arena $D$ is double-exponential (resp. single-exponential) in the size of $\Psi$, and the number of labels $k$ is linear in the size of $\Psi$, we get:
\begin{theorem} \label{thm: solving synthesis}
The \LTLfplus (resp. \PPLTLplus) synthesis problem is in $2$\exptime (resp. \exptime).
\end{theorem}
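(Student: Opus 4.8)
The plan is to read the statement purely as a complexity accounting over the four steps of the Synthesis Algorithm, since soundness and completeness have already been secured by Propositions~\ref{prop: quantified DFA} and~\ref{prop: product}: a winning strategy in the Emerson--Lei game $\AUTA_\Psi$ exists iff $\Psi$ is enforceable, and Theorem~\ref{thm: solving EL-games} moreover returns a finite representation. So the only task is to bound the running time of producing $\AUTA_\Psi$ and then solving it. First I would fix $\Psi$ in positive normal-form, let $k$ be the number of its infinite-trace subformulas $\mathbb{Q}_i\Phi_i$, and record two size facts: $k = O(|\Psi|)$, since each such subformula contributes to $\Psi$, and $|\Phi_i| \le |\Psi|$ for every $i$.

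Next I would track the state count through Steps 1--3. By Theorem~\ref{thm:logic to FA}, Step 1 gives DFAs $\AUTA_i$ with $|Q_i| \le 2^{2^{O(|\Phi_i|)}} \le 2^{2^{O(|\Psi|)}}$ in the \LTLfplus case and $|Q_i| \le 2^{O(|\Phi_i|)} \le 2^{O(|\Psi|)}$ in the \PPLTLplus case. Step 2 only redefines transitions and final-state membership over the \emph{same} state set, so $D'_i$ has exactly $|Q_i|$ states. Step 3 forms the product $D = \prod_{i\in[k]} D'_i$, so $|Q| = \prod_{i\in[k]} |Q_i|$, a product of $k$ factors, and the label set has size $|\Gamma| = k$. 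For \LTLfplus this yields $|Q| \le \bigl(2^{2^{O(|\Psi|)}}\bigr)^{k} = 2^{\,k\cdot 2^{O(|\Psi|)}}$, which is still $2^{2^{O(|\Psi|)}}$ because $k = O(|\Psi|)$ is absorbed into the outer exponent; the arena remains double-exponential. For \PPLTLplus this yields $|Q| \le \bigl(2^{O(|\Psi|)}\bigr)^{k} = 2^{O(k|\Psi|)} = 2^{O(|\Psi|^2)}$, so the arena remains single-exponential.

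Finally I would invoke Theorem~\ref{thm: solving EL-games} for Step 4: the game $\AUTA_\Psi$ is solved in time polynomial in $|Q|$ and exponential in $|\Gamma| = k$. For \LTLfplus, a polynomial in a double-exponential is again double-exponential, and the factor $2^{O(k)} = 2^{O(|\Psi|)}$ is subsumed, giving $2$\exptime. For \PPLTLplus, a polynomial in a single-exponential is again single-exponential, and again the $2^{O(|\Psi|)}$ factor is subsumed, giving \exptime. The one point that needs genuine care---and the main potential pitfall---is the $k$-fold product in Step 3: a priori, multiplying \emph{linearly many} automata could escalate the complexity tier. The accounting above shows it does not, precisely because $(2^{2^{n}})^{n} = 2^{\,n\,2^{n}} = 2^{2^{n+\log n}}$ stays double-exponential and $(2^{n})^{n} = 2^{n^2}$ stays single-exponential; linear $k$ only inflates the exponent within a tier, never the number of tiers. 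Verifying this, and that the exponential-in-$k$ cost of Step 4 is likewise dominated, is where the argument has to be made explicitly rather than waved through.
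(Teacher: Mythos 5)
Your proposal is correct and follows essentially the same route as the paper: the paper likewise observes that the arena $D$ is double-exponential (resp.\ single-exponential) in $|\Psi|$ while the number of labels $k$ is linear in $|\Psi|$, and then invokes Theorem~\ref{thm: solving EL-games}, with correctness resting on Propositions~\ref{prop: quantified DFA} and~\ref{prop: product}. You merely make explicit the size-accounting for the $k$-fold product that the paper states without calculation.
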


We observe that our algorithms are optimal:
\begin{theorem} \label{thm: synthesis complexity}
The $\LTLfplus$ (resp.\PPLTLplus) synthesis problem is $2$EXPTIME-complete (resp. EXPTIME-complete).
\end{theorem}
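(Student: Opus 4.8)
The goal is to show the synthesis problem is complete, not merely in the stated complexity class. Theorem 6 (solving synthesis) already gives the upper bounds: 2EXPTIME for LTLf+ and EXPTIME for PPLTL+. So the entire content of this theorem is the matching lower bounds, i.e.\ hardness.

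The plan is to prove hardness by reduction from the base-logic synthesis problems. For LTLf+, I would reduce from LTLf synthesis, which is known to be 2EXPTIME-hard. Given an LTLf formula $\Phi$ (a temporally extended goal to be achieved on a finite trace), the natural encoding as an infinite-trace specification is $\guar{\Phi}$: a strategy enforces "some finite prefix satisfies $\Phi$" exactly when it can drive every play to a finite prefix in $[\Phi]$, which is precisely the LTLf synthesis condition. Symmetrically, for PPLTL+ I would reduce from PPLTL synthesis (EXPTIME-hard) using the same $\guarSymbol$ wrapper around a PPLTL goal. The key claim to verify is that $\sigma$ enforces the finite-trace goal $\Phi$ (in the sense of LTLf/PPLTL synthesis) if and only if $\sigma$ enforces the infinite-trace formula $\guar{\Phi}$ (in the sense of Definition 2); this follows directly from the semantics $[\guar{\Phi}] = \guar{[\Phi]}$, since every outcome of $\sigma$ is an infinite trace and the condition "some non-empty finite prefix satisfies $\Phi$" matches the finite-trace winning condition. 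I would state this as a small lemma.

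\textbf{The main obstacle} is establishing the hardness source cleanly. For LTLf this is standard (2EXPTIME-hardness of LTLf synthesis is in \cite{DegVa15}), so the LTLf+ bound is essentially immediate. The delicate part is the EXPTIME-hardness for PPLTL+: one must ensure that PPLTL synthesis is genuinely EXPTIME-hard and that the $\guarSymbol$-embedding is polynomial (which it is, since it introduces no blow-up in the finite-trace formula). If a direct citation for EXPTIME-hardness of PPLTL synthesis is unavailable, I would instead give a direct reduction from the acceptance problem of a polynomial-space-bounded alternating Turing machine (equivalently, from a succinct reachability game), encoding machine configurations with atomic propositions and using a PPLTL goal $\Phi$ to certify reaching an accepting configuration; wrapping it in $\guarSymbol$ then yields the PPLTL+ instance. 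I would also need to confirm the reduction is polynomial-time computable and that the fixed $\AP$ convention of the preliminaries is respected.

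In summary: invoke Theorem 6 for membership; prove the embedding lemma that $\guar{\Phi}$-synthesis coincides with finite-trace synthesis of $\Phi$; then transfer 2EXPTIME-hardness of LTLf synthesis and EXPTIME-hardness of PPLTL synthesis through this embedding to obtain the two completeness results.
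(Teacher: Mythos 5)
Your proposal matches the paper's proof: the upper bounds come from the preceding synthesis theorem, and the lower bounds are obtained exactly as you describe, by observing that formulas of the form $\guar{\Phi}$ already capture \LTLf synthesis (2EXPTIME-hard, \cite{DegVa15}) and \PPLTL synthesis (EXPTIME-hard, \cite{DeGiacomoSFR20}). Your extra care in stating the embedding lemma and the fallback ATM reduction are unnecessary but harmless, since both hardness sources are available in the cited literature.
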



\begin{proof} 
For the lower bounds, synthesis is 2EXPTIME-hard already for the fragment of \LTLfplus consisting of 
formulas of the form $\guar{\Phi}$~\cite{DegVa15}, and synthesis is
EXPTIME-hard already for the fragment of \PPLTLplus for formulas of the form $\guar{\Phi}$~\cite{DeGiacomoSFR20}.
\end{proof}

\begin{remark}
\PPLTLplus synthesis is in \exptime (Theorem~\ref{thm: solving synthesis}), which is exponentially cheaper than \LTL synthesis, which is $2$\exptime-hard~\cite{DBLP:conf/focs/PnueliR90}. This means that \LTL is, in the worst case, exponentially more succinct than \PPLTLplus. One may be tempted to think that the fact that \PPLTLplus synthesis is cheaper than \LTL synthesis is because \LTL is always more succinct than \PPLTLplus. However, it turns out this is not true. Indeed, \PPLTLplus is, in the worst case, exponentially more succinct than \LTL~\cite[Section $3$]{DBLP:journals/eatcs/Markey03}. This leads to the interesting phenomenon of two logics with the same expressive power, incomparable succinctness, yet solving a logic problem for one of them is exponentially cheaper. This phenomenon holds for the finite-trace logics \LTLf/\PPLTL~\cite{ArtaleGGMM23}, and so the fact that \PPLTLplus synthesis is in \exptime shows that this phenomenon can be lifted to all LTL-definable infinite-trace~properties.
\end{remark}

\subsection{A new and simple algorithm for solving Emerson-Lei games}

Emerson-Lei games can be solved in time polynomial in the size of the arena and exponential in the number of labels. We mention two approaches to do this~\cite{DBLP:conf/fossacs/HausmannLP24}. One approach amounts to solving fixpoint equation systems, and has the potential to be implemented symbolically directly. Another approach first converts EL-games to parity games, and then solves the parity game. We supply a simple algorithm that takes the second approach. It uses a variation of the ``state appearance record'' construction for converting Muller automata to parity automata~\cite{GT4CS}. This yields the same complexity as in Theorem~\ref{thm: solving EL-games}.




\subsubsection{Parity automata}

Let $D$ be a deterministic transition system with state set $Q$.
A function $c:Q \to \mathbb{N}$ is called a \emph{coloring function on $D$}. The \emph{index} of $c$ is $|c(Q)|$, i.e., the number of different colors actually used by $c$.
The \emph{parity objective} $O_c$ consists of the infinite runs $\rho$ such that the largest color seen infinitely often on $\rho$ is even, i.e., $\rho \in O_c$ iff $\max \{c(q) : q \in \inf{(\rho)}\}$ is even. 
The pair $(D,c)$ is called a \emph{deterministic parity automaton (DPA)}. A \emph{parity game} is a pair $G = (D,c)$ where $D$ is an arena. Parity games can be solved in time polynomial in the size of $D$ and exponential in the index~\cite{GT4CS}; better algorithms are known, but we will not need these here.


\subsubsection{Converting EL-automata to parity automata}

We now describe the construction of a parity automaton $\AUTB = (D',c)$ equivalent to a given Emerson-Lei automaton $\AUTA = (D,(\Gamma,\lambda,B))$.
The idea is to enrich the state space of $D$ by keeping track of two pieces of information along a run: (i) the order in which labels in $\Gamma$ were last visited --- this is done by keeping a permutation $\pi$ of the elements of $\Gamma$; (ii) a way to detect which elements of $\Gamma$ are visited infinitely often --- this is done by keeping the largest position $h$ in $\pi$ that contains a label visited by the source state. Thus, the states of $\AUTB$ are of the form $(q,\pi,h)$ where $q$ is a state of $\AUTA$. Intuitively, the job of the permutations and pointers in the states of $\AUTB$ is to help the parity condition of $\AUTB$ detect if the set of labels visited infinitely often by the corresponding run of $\AUTA$ satisfies the Boolean function $B$ in the EL condition.  The way the transition relation of $\AUTB$ works is as follows: reading a symbol $a$ at a state $(q, \pi, h)$, the first coordinate (i.e., the state of $\AUTA$) is updated according to the transition relation of $\AUTA$, the permutation is updated by pushing all the labels that the source state $q$ visits to the left, and the pointer is set to the right-most position in $\pi$ that contains a label visited by $q$. Observe that it is because of the word `infinitely' in the previous sentence that the permutation and the pointer in the initial state can be chosen arbitrarily. We now supply the formal construction.

Without loss of generality, we may suppose that $\Gamma = [k]$ for some $k$. We think of a \emph{permutation} of $\Gamma$ as a string $\pi$ that consists of every character in $[k]$ exactly once. Let $\Pi$ be the set of such permutations.  Given a subset of labels $X \subseteq \Gamma$, and a permutation $\pi \in \Pi$, denote by $\pi \Lsh X$ the permutation derived from $\pi$ by pushing the elements in $X$ (without changing their relative order) all the way to the left.
We denote by $maxPos(X, \pi)$ the maximal position of an element of $X$ in $\pi$, i.e., $maxPos(X, \pi) = \max\{h \in [0,k-1] \mid \pi_h \in X\}$. To illustrate, if $k = 5$, $\pi = 04213$, and $X = \{1,4\}$, then $\pi \Lsh X = 41023$ and $maxPos(X, \pi) = 3$.

We are now ready to present the construction of the parity automaton equivalent to a given EL-automaton. Define $\AUTB = (T', c)$ with $T' = (\Sigma,Q',I',\delta')$ as follows: $Q' = Q \times \Pi \times [0,k-1]$, $I' = I \times \{\pi'\} \times \{0\}$ where $\pi' \in \Pi$ is arbitrary, $\delta'$ is defined by
$
\delta'((q, \pi, h), a) =  (\delta(q,a), \pi \Lsh \lambda(q), maxPos(\lambda(q), \pi)),
$
and the coloring function $c$ assigns to a state $s$ of $Q'$ one of the colors $[1,2k]$ as follows: if $s = (q,\pi,h)$, let $X_s \subseteq \Gamma$ be the set $\{\pi_0, \pi_1, \cdots, \pi_h\}$, and define $c(s) = 2|X_s| -1$ if $B(X_s) = \false$, and $c(s) = 2|X_s|$ if $B(X_s) = \true$.
This gives the following (the proof is in the supplement):
\begin{theorem}\label{thm:EL to parity}
There is an algorithm that converts a given deterministic Emerson-Lei automaton $\AUTA$ into an equivalent deterministic parity automaton $\AUTB$. If $\AUTA$ has $k$ labels, then $\AUTB$ has $|Q| k! k$ states and $2k$ colors.
\end{theorem}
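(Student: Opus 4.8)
\textbf{The plan} is to prove two things about the construction $\AUTB = (T',c)$: correctness (that $L(\AUTB) = L(\AUTA)$, or in the game setting that the induced objectives coincide) and the size bounds. The size bounds are immediate from the definition of $Q' = Q \times \Pi \times [0,k-1]$: there are $|Q|$ choices for the first coordinate, $k!$ permutations in $\Pi$, and $k$ positions in $[0,k-1]$, giving $|Q|\,k!\,k$ states; the colors range over $[1,2k]$, so there are $2k$ of them. So the real content is correctness.

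\textbf{Correctness} reduces to a single claim about an arbitrary infinite run. Fix an input trace and let $\rho = \rho_0 \rho_1 \cdots$ be the run it induces in $\AUTA$, and let $\rho' = \rho'_0 \rho'_1 \cdots$ be the run it induces in $\AUTB$; since the first coordinate of $\delta'$ just copies $\delta$, the projection of $\rho'$ onto $Q$ is exactly $\rho$. I want to show that $\rho$ satisfies the EL condition $(\Gamma,\lambda,B)$, i.e.\ $B(\inf_\lambda(\rho)) = \true$, if and only if $\rho'$ satisfies the parity objective $O_c$, i.e.\ the largest color seen infinitely often along $\rho'$ is even. Write $V = \inf_\lambda(\rho) = \bigcup\{\lambda(q) : q \in \inf(\rho)\}$ for the set of labels visited infinitely often. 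The key structural lemma I would isolate is: \emph{from some point on, the permutation coordinate $\pi$ arranges things so that the labels in $V$ always occupy a fixed-size prefix block, and the pointer $h$ reaches exactly the rightmost position of this block infinitely often (and never goes strictly beyond it from that point on).} Concretely, I would argue that (a) the $\Lsh$ operation only moves visited labels leftward, so a label visited infinitely often cannot be permanently stuck to the right of a label visited only finitely often; after the last visit to any finitely-visited label, every future $\Lsh \lambda(\rho_i)$ pushes only $V$-labels, and the $V$-labels migrate to and stabilize in the leftmost $|V|$ positions. (b) Once the leftmost $|V|$ positions are exactly $V$ and stay that way, every step sets $h = maxPos(\lambda(\rho_i),\pi)$, which is at most $|V|-1$ (the labels of $\rho_i$ are a subset of the stabilized block), and equals $|V|-1$ for those $i$ with $\rho_i$ visiting the label currently sitting in position $|V|-1$ — which happens infinitely often because that label is in $V$.

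\textbf{From this lemma} the color analysis is routine. In a state $s = (q,\pi,h)$ we set $X_s = \{\pi_0,\dots,\pi_h\}$ and color $c(s) = 2|X_s|$ or $2|X_s|-1$ according to whether $B(X_s)$ holds. Since $|X_s| = h+1$, larger $h$ means strictly larger color (the even/odd tweak changes the color by at most $1$, so the dominant term is $2(h+1)$). By the lemma, the largest value of $h$ occurring infinitely often is exactly $|V|-1$, and at those recurring states the prefix set is $X_s = \{\pi_0,\dots,\pi_{|V|-1}\} = V$. Hence the largest color seen infinitely often is $2|V|$ if $B(V) = \true$ and $2|V|-1$ if $B(V) = \false$; in particular it is even iff $B(V)=\true$, i.e.\ iff $\inf_\lambda(\rho)$ satisfies $B$. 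This is exactly the desired equivalence. For the game/automaton-language statement, I would note the projection $\rho \mapsto \rho'$ is a bijection on runs (the extra coordinates are determined by the trace), so accepting runs correspond.

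\textbf{The main obstacle} is making the stabilization argument in part (a) fully rigorous: one must show not merely that $V$-labels drift left but that they eventually occupy \emph{precisely} the leftmost $|V|$ slots and that no finitely-visited label ever re-enters that block afterward. The clean way is to pick an index $N$ after which no label outside $V$ is ever visited again; from $N$ onward the multiset of positions occupied by non-$V$ labels can only shift right (they are never pushed, and get displaced rightward whenever a $V$-label to their right is pulled left), so they monotonically settle into the top $k-|V|$ positions, forcing $V$ into the bottom $|V|$ positions. I would phrase this as a monotonicity/potential argument on the position of the rightmost $V$-label, and verify that after stabilization $maxPos$ behaves as claimed. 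Everything else — the size count and the translation of $h$-values into colors — is bookkeeping.
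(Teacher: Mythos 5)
Your proposal is correct and follows essentially the same route as the paper: your key structural lemma (the $V$-labels stabilize into the leftmost $|V|$ positions, $h$ eventually stays below $|V|$, and $h$ hits $|V|-1$ infinitely often) is exactly the paper's ``display/separate'' lemma, proved by the same monotonicity argument after the last visit to any finitely-visited label, and the color bookkeeping is identical. The only point to tighten is your step (b): one should note explicitly that the label at position $|V|-1$ cannot be displaced until it is itself visited (since $\Lsh$ only pulls visited labels leftward past unvisited ones), which is the case analysis the paper carries out.
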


We apply this to the constructed EL-automaton $\AUTA_\Psi = (D,(\Gamma,\lambda,B))$ (defined at the end of Step 3, above), to get an equivalent deterministic parity automaton $\AUTB = (D',c)$. In this case, the labels are the component numbers $[k]$ of the product $D = \prod_{i \in [k]} D'_i$, a state $q$ of $D$ is labeled by $i \in [k]$ if $q_i$ is a marked state of $D'_i$ (recall that the marking of a state $q_i$ simply depends on the type of the quantifier $\mathbb{Q}_i$ and whether or not $q_i$ is a final state of $D'_i$),
and a permutation records the order in which the components last visited a marked state. We remark that the construction of the final parity automaton is modular: each finite-trace formula of $\Psi$ is converted into a DFA, each of these DFAs are combined into a product (and updated according to the transition functions of the DFA), which itself is combined in a product with the set of permutations and the set of pointers of the component numbers (and updated according to which components of the current state are in final states of their respective DFAs).

\section{Reasoning} \label{sec: reasoning}

We now study other standard computational problems for our logics. Let $\mathfrak{L}$ be one of \LTLf/\PPLTL.
The \emph{$\mathfrak{L}+$ satisfiability problem} asks, given $\Psi$ in $\QeL$, to decide whether $[\Psi] \neq \emptyset$. The
\emph{$\mathfrak{L}+$ validity problem} asks, given $\Psi$ in $\QeL$, to decide whether $[\Psi] = (2^{\AP})^\omega$.  
For a nondeterministic transition system $T$, that is not assumed to be total, with input alphabet $\Sigma = 2^{\AP}$, we say that \emph{$T$ satisfies $\Psi$} if every infinite trace generated by $T$ satisfies $\Psi$.\footnote{There are many ways to define transitions systems for verification. Commonly, these are state-labeled directed graphs. For technical convenience, instead of labeling the state with $l$, we will instead label all outgoing edges with $l$, and thus we can view these as (not necessarily total) nondeterministic transition systems.}
The \emph{$\QeL$ model checking problem} asks, given $\Psi$ in $\QeL$, and a nondeterministic labeled transition system $T$, whether $T$ satisfies~$\Psi$.

\begin{theorem} \label{thm: satisfiability et al}
The \LTLfplus (resp. \PPLTLplus) satisfiability, validity, and model checking problems are EXPSPACE-complete (resp. PSPACE-complete).
\end{theorem}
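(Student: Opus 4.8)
I would prove matching upper and lower bounds for each of the three problems in each logic. First, note that the three problems are interreducible modulo complementation: $\Psi$ is valid iff $\lnot\Psi$ is unsatisfiable, and both \LTLfplus and \PPLTLplus are closed under negation in linear time (by pushing negations into the finite-trace formulas and dualizing the prefix quantifiers via $\lnot\safe{\Phi}\equiv\guar{\lnot\Phi}$ and $\lnot\recu{\Phi}\equiv\pers{\lnot\Phi}$). Model checking a transition system $T$ against $\Psi$ is the complement of the existence of a trace generated by $T$ satisfying $\lnot\Psi$; so it suffices to focus on satisfiability (emptiness) and the analogous ``does $T$ generate a satisfying trace'' question, and then take complements. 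Since PSPACE and EXPSPACE are closed under complement, the complexity class is preserved throughout.

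\textbf{Upper bounds.} The automata-theoretic machinery from the Synthesis Algorithm already gives the decision procedures; the only change is that for reasoning we need not solve a game but only test emptiness (or emptiness relative to $T$). Concretely, for \PPLTLplus I would convert each finite-trace formula $\Phi_i$ into a DFA $\AUTA_i$, which is single-exponential in $|\Phi_i|$ by Theorem~\ref{thm:logic to FA}, apply the Step~2 modifications to obtain $D'_i$ with its objective $O_i$, and form the product Emerson-Lei automaton $\AUTA_\Psi$ as in Step~3; by Proposition~\ref{prop: product}, $[\Psi]\neq\emptyset$ iff $L(\AUTA_\Psi)\neq\emptyset$. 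The key point is that I do not materialize the exponential-size product: each state of $D$ is a tuple of DFA states of polynomial total description, and nonemptiness of a deterministic Emerson-Lei automaton reduces to finding a reachable cycle whose set of infinitely-visited labels satisfies $B$, which can be checked by a nondeterministic procedure guessing the cycle on-the-fly using only polynomial space (guess a ``lasso'', track the current product state and the set of labels seen in the loop, and evaluate $B$). This yields NPSPACE $=$ PSPACE for \PPLTLplus. For \LTLfplus the DFAs are double-exponential, so the product states have single-exponential description and the same on-the-fly lasso search runs in EXPSPACE. Model checking is handled identically by taking the product of $T$ with $\AUTA_{\lnot\Psi}$ and testing for a generated satisfying trace, again via on-the-fly cycle search in the appropriate space bound.

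\textbf{Lower bounds.} The PSPACE-hardness of \PPLTLplus reasoning follows already from the guarantee (or safety) fragment: satisfiability of a single \PPLTL formula $\Phi$ reduces to satisfiability of $\guar{\Phi}$ (and dually validity/model checking), and \PPLTL satisfiability is PSPACE-hard. For \LTLfplus, the crucial input is the recent EXPSPACE-hardness result for model checking the guarantee fragment \cite{DBLP:conf/atva/BansalLTVW23}, which I would cite directly; by the interreducibility above this transfers to satisfiability and validity. I expect the main obstacle to be the EXPSPACE upper bound for \LTLfplus: one must avoid ever constructing the double-exponential product DFA explicitly, and instead argue that membership of a guessed product state and the label-set evaluation can both be computed in single-exponential space on demand from the \LTLf formulas themselves. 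The cleanest way is to recall that an \LTLf formula is converted first to an exponential-size NFA, whose subset-construction states are sets of NFA states storable in single-exponential space, so that the product Emerson-Lei automaton admits an on-the-fly exploration with an exponential-space worktape; Savitch's theorem then collapses the nondeterministic lasso search to deterministic EXPSPACE. I would make sure the label-evaluation of $B$ and the detection of marked states (which depends only on the quantifier type and final-state membership) are both local and cheap, so that the whole procedure stays within the claimed bound.
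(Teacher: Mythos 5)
Your upper bounds and your PSPACE lower bounds follow essentially the same route as the paper: the same on-the-fly lasso search for a reachable cycle whose visited label set satisfies $B$ in the Emerson-Lei product, with product states represented via NFA subsets (for \LTLfplus) or sets of subformulas (for \PPLTLplus) so that the search runs in N(EXP)SPACE and Savitch closes the argument; and the same reduction from \PPLTL satisfiability to $\guar{\Phi}$ satisfiability. So far, so good.

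There is, however, a genuine gap in your EXPSPACE lower bound for \LTLfplus satisfiability and validity. You propose to take the EXPSPACE-hardness of \emph{model checking} the guarantee fragment from \cite{DBLP:conf/atva/BansalLTVW23} and claim that ``by the interreducibility above this transfers to satisfiability and validity.'' But the interreducibility you established only goes in the other direction: validity reduces to model checking (check $\Psi$ against the complete transition system), and validity reduces to unsatisfiability of $\lnot\Psi$. Neither gives you a reduction \emph{from} model checking \emph{to} satisfiability; in general model checking can be harder or easier than satisfiability (compare CTL), so hardness does not automatically flow that way. To make your route work you would need to encode the transition system $T$ itself as a formula (e.g., a $\safe{\Phi_T}$ conjunct) and reduce $T \models \Psi$ to validity of $\safe{\Phi_T} \limp \Psi$, which is an extra step you neither state nor verify. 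The paper avoids this by arguing in the opposite direction: it gives a direct EXPSPACE-hardness proof that satisfiability of formulas $\safe{\Phi}$ is hard, via an explicit encoding of exponential-space Turing machine computations into an \LTLf formula of polynomial size (the supplement's Theorem~\ref{thm: safe LTLf satisfiability is expspace-hard}, noting that this result does not appear explicitly in the literature), and then derives validity and model-checking hardness from it using the one-directional reductions. You should either supply such a direct satisfiability-hardness construction or spell out and justify the encoding of $T$ into the logic.
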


The lower bounds are based on existing results~\cite{DegVa13}, or standard encodings~\cite{DBLP:conf/stoc/VardiS85,kupferman2000automata,DBLP:conf/atva/BansalLTVW23}, already for formulas of the form $\safe{\Phi}$ and $\guar{\Phi}$ (details are given in the supplement). We now present the algorithms, which give the matching upper bounds.

\subsection{Solution Techniques for Reasoning}

We provide optimal algorithms for satisfiability. These can be adapted, in the standard way, to give optimal algorithms for model-checking and validity (we do this in the supplement).  Given $\Psi$, build the equivalent EL-automaton $\AUTA_\Psi = (D,(\Gamma,\lambda,B))$ (defined at the end of Step 3, above). Recall that $|\Gamma|=k$, which is linear in $|\Psi|$. Satisfiability can be solved by checking if there is an infinite run in $D$ that satisfies the EL-condition.\footnote{For model-checking, check if there is a trace generated by $T$ that induces a run in $D$ that satisfies the EL-condition for the negation of the formula, and then complement the answer.} This can be done in time polynomial in the size of $D$ and $k$, as follows. Check if there is a state $s$ of $D$ such that (i) there is a path from the initial state to $s$, and (ii) there is a path (that uses at least one transition) from $s$ to $s$ such that the set of labels visited on this path is a satisfying assignment of $B$. 
However, since the size of $D$ is doubly-exponential (resp. exponential) in the size $|\Psi|$ of the given formula, this algorithm would run in 2EXPTIME for \LTLfplus (resp. EXPTIME for \PPLTLplus). To lower the complexities to EXPSPACE (resp. PSPACE) we observe that the search for such a state $s$ can be done nondeterministically while consulting $D$ ``on-the-fly'' using standard space-efficient techniques for finding paths in exponentially large graphs~\cite{DBLP:conf/banff/Vardi95}, and then applying Savitch's Theorem~\cite{Savitch70}. For completeness, we now outline what this amounts to in our particular case.

Instead of constructing $D$ explicitly, we nondeterministically guess the state $s$ and verify the existence of the paths satisfying (i) and (ii). For the verification, the algorithm should not store every state on the path (it would run out of space if it did this); besides storing $s$, at each step it only needs to store the current state and the next (guessed) state. Thus, since we will be using a binary encoding for the states, if there are $N$ states the algorithm uses $O(\log N)$ space. Moreover, for this procedure to work we only need to perform the following building blocks $(\dagger)$: (a) given a state $s$ of $D$, compute its labels in polynomial time in the size of $s$ (and evaluate the formula $B$ on a set of labels; recall that the size of the formula $B$ is linear in $|\Psi|$, and formula evaluation can be done in linear time), and (b) given states $s,s'$ and $a \in \Sigma$, verify in polynomial time, in the size of $s$ and $s'$, if there is a transition in $D$ labeled $a$ that takes $s$ to $s'$. Both can be done by accessing (but not building) the states and transitions of the corresponding DFAs (resp. NFAs) that make up the product $D$ (see the supplement for details).


\section{Related Work}

Besides the extensive related work discussed in the introduction, we mention that Manna and Pnueli's safety-progress hierarchy has mainly been used to understand the expressivity of fragments of \LTL. However, recent results that convert arbitrary \LTL formulas into a normal form consisting of a combination of classes in the hierarchy in single exponential time \cite{EsparzaKS20,EsparzaRS24} are providing new interest in the hierarchy, especially in the variant where classes are expressed in \LTL
\cite{ChangMP92}. 
Our work on \LTLfplus and \PPLTLplus shares this interest, though sidesteps the \LTL normalization, and focuses instead on specifying directly using finite traces, and then lifting to infinite traces through quantification on prefixes.

\section{Conclusion} 

We introduced two logics \LTLfplus/\PPLTLplus that have the same expressive power as \LTL, and showed how to apply DFA techniques to build arenas for synthesis, typical for handling the base finite-trace logics, thus side-stepping the difficulty of determinizing automata on infinite traces, typical of LTL synthesis.
Our approach only relies on the base logic being closed under negation and having conversions into DFAs. Thus, it would also work for any other base logic with these properties, e.g., linear dynamic logic on finite traces~\cite{DegVa15} or pure past linear dynamic logic~\cite{DeGiacomoSFR20}.
We thus advocate revisiting Manna and Pnueli's safety-progress hierarchy, focusing on (i) using it directly for representation, by putting emphasis on the finite-trace components of infinite properties, and (ii) solving computational problems  by exploiting DFA-based arena constructions. 

Finally, we mention that our solution to synthesis requires solving an Emerson-Lei game built from the product of the DFAs with a simple Boolean formula (that mimics the given formula). This can be implemented based on fixpoint computations~\cite{DBLP:conf/fossacs/HausmannLP24}, which has the potential for efficient symbolic implementations, or by  annotating the product with permutations of and pointers to the component numbers, to get a parity game, and then using existing parity-game solvers~\cite{DBLP:conf/tacas/Dijk18}.

\bibliography{references.bib}

\clearpage 


\section{Supplement to ``Preliminaries''}

\subsubsection{Emerson-Lei games and automata}
Deterministic Emerson-Lei automata (DELA) are closed under Boolean operations, with very simple constructions, and trivial proofs.

\begin{proposition} \label{prop:DELA closure}
\begin{enumerate}
\item Given a DELA $\AUTA = (T,(\Gamma, \lambda, B))$, the DELA $\AUTB = (T, (\Gamma, \lambda, \lnot B)$ accepts the complement of $L(\AUTA)$.
\item  Given DELA $\AUTA = (T_1, (\Gamma_1, \lambda_1, B_1))$ and $\AUTB = (T_2, (\Gamma_2, \lambda_2, B_2))$, with $\Gamma_1$ and $\Gamma_2$ being disjoint, let $T$ be the product of $T_1$ and $T_2$. We have that the DELA $\AUTC = (T, (\Gamma_1 \cup \Gamma_2, \lambda, B))$, where $\lambda(p,q) = \lambda_1(p) \cup \lambda_2(q)$, accepts the language $L(\AUTA) \cup L(\AUTB)$ if we take $B = B_1 \lor B_2$, and the language $L(\AUTA) \cap L(\AUTB)$ if we take $B = B_1 \land B_2$.
\end{enumerate}
\end{proposition}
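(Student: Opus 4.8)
The plan is to exploit determinism throughout. Since $T$, $T_1$, and $T_2$ are all deterministic, every infinite trace $\tau$ induces exactly one run, and membership of $\tau$ in the language of a DELA is decided simply by evaluating the Boolean function on the single set $\inf_\lambda$ of labels seen infinitely often along that run. Both parts then reduce to tracking how $\inf_\lambda$ behaves under the two constructions, and the only real content lies in one set-theoretic identity for the product case.

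For Part~1, I would fix an infinite trace $\tau$ and let $\rho$ be the unique run it induces in $T$. Since $\AUTA$ and $\AUTB$ share the same transition system $T$ and the same labelling $\lambda$, the trace $\tau$ induces the very same run $\rho$ in both, hence the very same set $\inf_\lambda(\rho)$. By definition $\tau \in L(\AUTB)$ iff $(\lnot B)(\inf_\lambda(\rho)) = \true$, i.e. iff $B(\inf_\lambda(\rho)) = \false$, i.e. iff $\tau \notin L(\AUTA)$. This yields $L(\AUTB) = (2^{\AP})^\omega \setminus L(\AUTA)$ at once. It is precisely determinism --- one run per trace --- that makes complementing the Boolean condition complement the language; the same move would fail for nondeterministic EL automata.

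For Part~2, let $T$ be the product of $T_1$ and $T_2$, which is again deterministic, let $\rho$ be the run induced by $\tau$ in $T$, and let $\rho^1,\rho^2$ be its coordinate projections; these are exactly the runs induced by $\tau$ in $T_1$ and $T_2$. The crux is the identity
\[
\inf_\lambda(\rho) \cap \Gamma_j = \inf_{\lambda_j}(\rho^j) \quad (j=1,2),
\]
from which, since $\Gamma = \Gamma_1 \cup \Gamma_2$ with $\Gamma_1,\Gamma_2$ disjoint and $\lambda(p,q) = \lambda_1(p) \cup \lambda_2(q)$, I get $\inf_\lambda(\rho) = \inf_{\lambda_1}(\rho^1) \cup \inf_{\lambda_2}(\rho^2)$. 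The $\subseteq$-direction is easy: a state occurring infinitely often in $\rho$ projects to one occurring infinitely often in $\rho^j$. The reverse inclusion is where the only work lies: if $l \in \Gamma_j$ is visited infinitely often by $\rho^j$, witnessed by a state $p$ occurring infinitely often in $\rho^j$, I would single out the infinitely many positions carrying $p$ in coordinate $j$ and apply a pigeonhole argument over the finite state set of the other component to obtain a full product state $(p,q^\ast)$ occurring infinitely often in $\rho$; since $l \in \lambda_j(p)$ and the label sets are disjoint, this product state witnesses $l \in \inf_\lambda(\rho)$.

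With the identity in hand the conclusion is mechanical. Reading $B_1 \lor B_2$ (resp.\ $B_1 \land B_2$) as a Boolean function over $2^\Gamma$ amounts to evaluating each $B_j$ on the $\Gamma_j$-part of its argument; hence in the union case $\tau \in L(\AUTC)$ iff $B_1(\inf_\lambda(\rho) \cap \Gamma_1) \lor B_2(\inf_\lambda(\rho) \cap \Gamma_2)$ holds, which by the identity equals $B_1(\inf_{\lambda_1}(\rho^1)) \lor B_2(\inf_{\lambda_2}(\rho^2))$, i.e.\ $\tau \in L(\AUTA)$ or $\tau \in L(\AUTB)$; the intersection case is identical with $\land$ throughout. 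The main (and essentially only) obstacle is the pigeonhole step establishing the $\supseteq$-direction of the $\inf_\lambda$ identity; everything else is unwinding definitions.
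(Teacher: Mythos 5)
Your proof is correct and is exactly the standard unwinding that the paper has in mind: the paper states this proposition with no proof at all, explicitly calling the constructions ``very simple'' and the proofs ``trivial''. Your treatment of determinism in Part~1 and the pigeonhole argument establishing $\inf_\lambda(\rho) = \inf_{\lambda_1}(\rho^1) \cup \inf_{\lambda_2}(\rho^2)$ in Part~2 are precisely the details being elided, and they are carried out correctly.
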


Note that a Muller condition $F \subseteq 2^Q$ --- which is satisfied by those runs $\rho$ such that $\inf(\rho) \in F$ --- is equivalent to the Emerson-Lei condition $(Q, \lambda, B)$, where $\lambda(q) = \{q\}$, and $B$ is the characteristic function of $F$. This shows that every $\omega$-regular language, and thus every \LTL-definable property in particular, is the language of some DELA.

  

\subsubsection{\LTL}
Linear Temporal Logic (LTL) \cite{Pnueli77} is one of the most used logic for specifying temporal properties. Its syntax is:
%
\[ \varphi ::= 
        p 
    \mid 
        \neg \varphi 
    \mid 
        \varphi \land \varphi 
    \mid 
        \varphi \LTLX \varphi 
    \mid 
        \varphi \LTLU \varphi
\]
where $p \in AP$. Here, $\LTLX$ (``next") and $\LTLU$ (``until") are the \emph{future operators}.
Common abbreviations are also used, such as $\varphi_1 \lor \varphi_2 = \lnot (\lnot \varphi_1 \land \lnot \varphi_2$), $\true = p \lor \lnot p$, $\LTLF \varphi = \true \LTLU \varphi$ (``eventually"), $\LTLG \varphi = \lnot \LTLF \lnot \varphi$ (``always").

\LTL formulas are interpreted over infinite traces. Let $\tau$ be a infinite trace. The positions in $\tau$ are indexed by $i$ with $0 \leq i $. For such an $i$, define $\tau,i \models \varphi$ (read "$\varphi$ satisfies $\tau$ at position $i$") inductively:
\begin{itemize}   
    \item $\tau,i \models p$ iff $p \in \tau_i$
    
    \item $\tau,i \models \neg \varphi$ iff $\tau,i \not \models \varphi$
    
    \item $\tau,i \models \varphi_1 \land \varphi_2$ iff $\tau,i \models \varphi_1$ and $\tau,i \models \varphi_2$
    
    \item $\tau,i \models \LTLX \varphi$ iff $\tau, i+1 \models \varphi$
    
    \item 	$\tau,i \models \varphi_1 \LTLU \varphi_2$ iff $\exists k$ with $i \leq k$ such that $(\tau, k \models \varphi_2$ and $\forall j$ with $i \leq j < k$ we have $\tau, j \models \varphi_1)$.
\end{itemize}     
For a formula $\varphi$ of \LTL, let $[\varphi]$ denote the set of infinite traces $\tau$ such that $\tau,0 \models \varphi$. In other words, formulas of \LTL are evaluated at the start of the given infinite trace.

\section{Supplement to ``Linear Temporal Logics on Finite Traces''}

\subsubsection{Semantics of \LTLf} Let $\tau$ be a finite trace. The positions in $\tau$ are indexed by $i$ with $0 \leq i < |\tau|$. Thus, e.g., $i+1 < |\tau|$ means that $i+1$ is a position of $\tau$.
For such an $i$, define $\tau,i \models \varphi$ (read "$\varphi$ satisfies $\tau$ at position $i$") inductively:
\begin{itemize}   
    \item $\tau,i \models p$ iff $p \in \tau_i$
    
    \item $\tau,i \models \neg \varphi$ iff $\tau,i \not \models \varphi$
    
    \item $\tau,i \models \varphi_1 \land \varphi_2$ iff $\tau,i \models \varphi_1$ and $\tau,i \models \varphi_2$
    
    \item $\tau,i \models \LTLX \varphi$ iff $i+1<|\tau|$ and $\tau, i+1 \models \varphi$
    
    \item 	$\tau,i \models \varphi_1 \LTLU \varphi_2$ iff $\exists k$ with $i \leq k < |\tau|$ such that $(\tau, k \models \varphi_2$ and $\forall j$ with $i \leq j < k$ we have $\tau, j \models \varphi_1)$.
\end{itemize}     
For a formula $\varphi$ of \LTLf, let $[\varphi]$ denote the set of non-empty finite traces $\tau$ such that $\tau,0 \models \varphi$. In other words, formulas of \LTLf are evaluated at the start of the given finite trace. 

\subsubsection{Semantics of \PPLTL}

For a trace $\tau$ and a position $i$, we define $\tau,i \models \varphi$ (read "$\varphi$ satisfies $\tau$ at position $i$") inductively (analogously to \LTLf):
\begin{itemize}   
    \item $\tau,i \models p$ iff $p \in \tau_i$
    
    \item $\tau,i \models \neg \varphi$ iff $\tau,i \not \models \varphi$
    
    \item $\tau,i \models \varphi_1 \land \varphi_2$ iff $\tau,i \models \varphi_1$ and $\tau,i \models \varphi_2$

    \item $\tau,i \models \LTLY \varphi$ iff $0<i$ and $\tau, i-1 \models \varphi$
        
    \item 	$\tau,i \models \varphi_1 \LTLS \varphi_2$ iff $\exists k$ with $0 \leq k \leq i$ such that $(\tau, k \models \varphi_2$ and $\forall j$ with $k < j \leq i$ we have $\tau, j \models \varphi_1)$.
\end{itemize}   
For a formula $\varphi$ of \PPLTL, let $[\varphi]$ denote the set of non-empty finite traces $\tau$ such that, if $n = |\tau|$, then $\tau,n-1 \models \varphi$. In other words, formulas of \PPLTL are evaluated starting at the end of the given finite trace. 

\subsubsection{Converting to DFAs}
We make some remarks about the conversions from these logics to automata:

\begin{remark} \label{rem: LTLf to DFA}
The conversion from \LTLf to NFA can be done by first converting, in linear time, the \LTLf formula into an alternating finite automaton (AFA), and then applying a simple subset construction that converts an AFA into an equivalent NFA~\cite{DegVa13}. Then, transforming an NFA into an equivalent DFA uses a simple subset construction~\cite{RaSc59}, as follows. Given an NFA $(T,F)$ with $T = (\Sigma,Q,I,\delta)$, define a DFA $(T',F')$ with $T' = (\Sigma,Q',\iota',\delta')$ where $Q' = 2^Q, \iota' = I$, define $\delta'(X,a) = \bigcup_{s \in X} \delta(s,a)$, and $F' = \{X \subseteq Q : X \cap F \neq \emptyset\}$.
Although there are exponentially many states in the DFA, each state can be stored using $Q$-many bits, i.e., in linear size. Moreover, given two states $X,Y$ of the DFA and an input letter $a$, one can decide in polynomial time if $\delta'(X,a) = Y$.
\end{remark}

\begin{remark} \label{rem: PP to DFA}
We briefly outline the conversion from \PPLTL to \DFA, cf.~\cite{BonassiGFFGS23icaps}. 
Given a \PPLTL formula $\Phi$ over $\AP$, let $S$ be the set of subformulas of $\varphi$, and build a DFA 
$(\Sigma,Q,\iota,F)$ with $\Sigma = 2^{\AP}$, $Q = 2^S \cup \{\iota\}$, and $F = \{X \subseteq S : \varphi \in X\}$. Define $\delta(\iota,v)$ to consists of the set of subformulas of $\varphi$ that are true on the word $v$ of length $1$.
Define $\delta(X,v) = Y$ to consist of subformulas $\psi$ of $\Phi$ chosen recursively as follows. If $\psi$ is an atom, put it into $Y$ if it is in the input $v$; put a conjunction $\psi = \psi_1 \land \psi_2$ in $Y$ if both $\psi_1,\psi_2$ are put into $Y$; put $\lnot \psi$ into $Y$ if $\psi$ is not put into $Y$; put $\LTLY \psi$ in $Y$ if $\psi \in X$; put $\psi = \psi_1 \LTLS \psi_2$ in $Y$ if either $\psi_2$ is put in $Y$, or $\psi_1$ is put in $Y$ and $\psi_1 \LTLS \psi_2 \in X$. 
Intuitively, the DFA remembers, in its state, the set of subformulas that hold on the word read so far, and after reading the next input letter, it updates this set. The reason this is possible for \PPLTL is that the truth of a formula $\psi$ on a word $u_0 u_1 \cdots u_n$ depends only on $u_n$ and on the truth of its subformulas on the word $u_0 \cdots u_{n-1}$\cite{DeGiacomoSFR20}; note that this property fails for \LTLf.
\end{remark}

\section{Supplement to ``Synthesis''}

\subsection{Proof of Proposition~\ref{prop: quantified DFA}}
    Let $\tau$ be an infinite trace and let $\rho$ (resp. $\rho'$) be the run in $D_i$ (resp. $D'_i$) induced by $\tau$. Recall that the empty string is not a trace. A non-empty prefix of $\tau$ corresponds to a prefix of its run of length $>1$. 

    We look at each case separately. 
    
    \begin{enumerate}
    \item $\tau$ satisfies $\recu{\Phi_i}$ iff (by Definition of $\recuSymbol$) infinitely many prefixes of $\tau$ satisfy $\Phi_i$ iff (by Theorem~\ref{thm:logic to FA}) infinitely many prefixes of $\rho$ end in $F_i$ iff $\inf(\rho) \cap F_i \neq \emptyset$ iff (since $\rho' = \rho$ and $F'_i = F_i$) $\rho'$ satisfies $O_i$.
    
    \item $\tau$ satisfies $\pers{\Phi_i}$ iff (by Definition of $\persSymbol$) almost all prefixes of $\tau$ satisfy $\Phi_i$ iff (by Theorem~\ref{thm:logic to FA}) almost all prefixes of $\rho$ end in $F_i$ iff it is not the case that infinitely many prefixes of $\rho$ end in $Q_i \setminus F_i$ iff $\inf(\rho) \cap (Q \setminus F_i) = \emptyset$ iff  (since $\rho' = \rho$ and $F'_i = F_i$) $\rho'$ satisfies $O_i$.
    
    \item $\tau$ satisfies $\safe{\Phi_i}$ iff (by Definition of $\safeSymbol$) all non-empty prefixes of $\tau$ satisfy $\Phi_i$ 
    iff (by Theorem~\ref{thm:logic to FA}) $\rho_n \in F_i$ for all $n > 0$
    iff (since the transitions from the initial state and final states in $D_i$ are unchanged in $D'_i$) $\rho'_n \in F'_i$ for all $n > 0$
    iff $\rho'_n \in F'_i$ for infinitely many $n$ (since all non-final states are sinks)
    iff $\inf(\rho') \cap F'_i \neq \emptyset$ iff $\rho'$ satisfies~$O_i$.
    
    \item $\tau$ satisfies $\guar{\Phi_i}$ iff (by Definition of $\guarSymbol$) some non-empty prefix of $\tau$ satisfies $\Phi_i$ 
    iff (by Theorem~\ref{thm:logic to FA}) $\rho_n \in F_i$ for some $n > 0$
    iff (since the transitions from the initial state and non-final states in $D_i$ are unchanged in $D'_i$, and taking the least such $n$) $\rho'_n \in F'_i$ for some $n > 0$
    iff $\rho'_n \in F'_i$ for infinitely many $n$ (since all final states are sinks)
    iff $\inf(\rho') \cap F'_i \neq \emptyset$ iff $\rho'$ satisfies $O_i$.
    \end{enumerate}
    This completes the proof.

\subsection{Proof of Proposition~\ref{prop: product}}
In Step 2 of the Synthesis Algorithm, for $i \in [k]$, the objective $O_i$ is induced by the following EL-condition $(\Gamma_i,\lambda_i,B_i)$ over $Q_i$. Let $\Gamma = \{i\}$, i.e., the only label is $i$. For $q_i \in Q_i$, let $\lambda_i(q_i) = \{i\}$ if $q_i$ is marked, and $\lambda_i(q_i) = \{\}$ otherwise. Then, for a run $\rho$ in $D'_i$, we have that $\inf_{\lambda_i}(\rho) = \{i\}$ iff some state seen infinitely often in $\rho$ is marked (and otherwise $\inf_{\lambda_i}(\rho) = \{\}$). We consider the two cases:
\begin{enumerate}
\item Say $\mathbb{Q}_i \neq \persSymbol$. Recall that $q_i$ is marked iff $q_i \in F'_i$. Thus, $\rho \in O_i$ iff $\inf(\rho) \cap F'_i \neq \emptyset$ iff $\inf_{\lambda_i}(\rho) = \{i\}$ iff $\rho$ satisfies the EL-condition  $(\Gamma_i,\lambda_i,B_i)$ where $B_i$ is the Boolean formula $i$.

\item Say $\mathbb{Q}_i = \persSymbol$. Recall that $q_i$ is marked iff $q_i \in Q_i \setminus F'_i$. Thus, $\rho \in O_i$ iff $\inf(\rho) \cap (Q_i \setminus F'_i) = \emptyset$ iff $\inf_{\lambda_i}(\rho) = \{\}$ iff $\rho$ satisfies the EL-condition  $(\Gamma_i,\lambda_i,B_i)$ where $B_i$ is the Boolean formula $\lnot i$.
\end{enumerate} 

By Proposition~\ref{prop: quantified DFA}, $\Psi$ is equivalent to the Boolean combination $\widehat{\Psi}$ of EL-automata $\AUTA_i := (D'_i,(\Gamma_i,\lambda_i,B_i))$, i.e., replace in $\Psi$ every $\mathbb{Q}_i \Phi_i$ by $\AUTA_i$.
Applying Proposition~\ref{prop:DELA closure} to this Boolean combination results in an EL-automaton whose transition system is the product $D$, and whose EL-condition is $(\Gamma,\lambda,B)$ where $\Gamma = [k]$, $\lambda((q_1,\cdots,q_n)) = \bigcup_i \lambda_i(q_i)$, and $B$ is formed from $\widehat{\Psi}$ by simultaneously replacing, for every $i$ with $\mathbb{Q}_i = \persSymbol$, every occurrence of $i$ by $\lnot i$. Now notice that this is exactly the definition of $\AUTA_\Psi$ in Step 3 of the Synthesis Algorithm.

\subsection{Proof of Theorem~\ref{thm:EL to parity}}

Let $\AUTA = (T,(\Gamma,\lambda,B))$ be an Emerson-Lei automaton. We will need some definitions before constructing the equivalent parity automaton $\AUTB$.

Denote by $\Pi$ the set of permutations of $\Gamma$, i.e., the set of all bijections $\pi: [0,k-1] \to \Gamma$. We will find it convenient to think of such permutations as strings (without repetitions) of length $k$ over the alphabet $\Gamma$ (although in this proof we will not use the string notation $\pi_i$ to refer to the $i$th character of $\pi$, but reserve this notation for another use).

Given a subset of labels $X \subseteq \Gamma$, and a permutation $\pi \in \Pi$, denote by $\pi \Lsh X$ the permutation derived from $\pi$ by pushing the elements in $X$ (without changing their relative order) all the way to the left.
We denote by $maxPos(X, \pi)$ the maximal position of an element of $X$ in $\pi$, i.e., $maxPos(X, \pi) = \max\{h \in [0,k-1] \mid \pi(h) \in X\}$. To illustrate, if $\pi = 04213$, and $X = \{1,4\}$, then $\pi \Lsh X = 41023$ and $maxPos(X, \pi) = 3$.

We formally define $\pi \Lsh X$ as the unique permutation $\pi'$ satisfying for every $x,y \in \Gamma$ that: if $x \in \Gamma$ and $y \not \in \Gamma$ then $\pi'^{-1}(x) < \pi'^{-1}(y)$, and if $x,y \in \Gamma$ or $x,y \not \in \Gamma$ then $\pi'^{-1}(x) < \pi'^{-1}(y)$ iff $\pi^{-1}(x) < \pi^{-1}(y)$.

For a non-empty $X \subseteq \Gamma$, we say that a permutation $\pi$ \emph{displays} $X$ if $X = \{\pi(0), \pi(1), \ldots, \pi(|X|-1)\}$, i.e., if (when viewed as a string) the elements of $X$ appear on $\pi$ to the left of the elements of $\Gamma \setminus X$. Finally, for $h \in [0,k-1]$, we say that $(\pi, h)$ \emph{separates} $X$, if $\pi$ displays $X$ and $h = |X| -1$, i.e., if the elements of $X$ occupy exactly the positions $0$ to $h$. 
Observe that while $\pi$ displays $k$ different sets, the pair $(\pi, h)$ separates exactly one set. 
For example, if $\pi = 04213$ then $\pi$ displays the sets $\{0\}, \{0,4\}, \{0,2,4\}, \{0,1,2,4\}, \{0,1,2,3,4\}$, and $(\pi, 2)$ separates the set 
$\{0,2,4\}$.
%

We are now ready to present the construction of a parity automaton equivalent to a given EL-automaton. Define the DPA $\AUTB = (T', c)$ with $T' = (\Sigma,Q',I',\delta')$ as follows: $Q' = Q \times \Pi \times [0,k-1]$, $I' = I \times \{\pi_0\} \times \{0\}$ where $\pi_0 \in \Pi$ is some arbitrary fixed designated permutation\footnote{For example, if $\Gamma$ has a linear ordering then a convenient choice for $\pi_0$ may be the permutation that orders the labels from smallest to largest. In any case, any other permutation will do just as well.},
and $\delta'$ is defined by: 
\[
\delta'((q, \pi, h), a) =  (\delta(q,a), \pi \Lsh \lambda(q), maxPos(\lambda(q), \pi))
\]
 
Given a state $s = (q, \pi, h)$ of  $\AUTB$, we call $q$ the \emph{state in} s, call $\pi$ the \emph{permutation in} $s$, and call $h$ the \emph{pointer in} $s$. Given in addition a set $X \subseteq \Gamma$, we will say that $s$ \emph{displays} $X$ if $\pi$ displays $X$, and that $s$ \emph{separates} $X$ if $(\pi, h)$ separates $X$.
Finally, the coloring function $c$ of $\AUTB$ assigns to a state $s$ of $\AUTB$ one of the colors $[1,2k]$ as follows: let $X \subseteq \Gamma$ be the set separated by $s$, and define $c(s) = 2|X| -1$ if $B(X) = \false$, and $c(s) = 2|X|$ if $B(X) = \true$.

Intuitively, the job of the permutations and pointers in the states of $\AUTB$ is to help the parity condition of $\AUTB$ detect if the set of labels visited infinitely often by the corresponding run of $\AUTA$ satisfies the Boolean function $B$ in the EL condition. Observe that it is because of the word `infinitely' in the previous sentence that the permutation (and the pointer) in the initial state can be chosen arbitrarily. Informally, the way the transition relation of $\AUTB$ works is as follows: reading a symbol $a$ at a state $s = (q, \pi, h)$, the first coordinate (i.e., the state of $\AUTA$) is updated according to the transition relation of $\AUTA$, the permutation is updated by pushing all the labels that the source state $q$ visits to the left, and the pointer is set to the right-most position in $\pi$ that contains a label visited by~$q$.

Note that, given a run $\varrho$ of $\AUTB$ one can obtain a \emph{corresponding} run $\rho$ of $\AUTA$ by projecting each state in $\varrho$ onto its first coordinate.
Given a run $\varrho$ of $\AUTB$, and $i \in \mathbb{N}$, we will denote by $q_i, \pi_i, h_i$ the three components that make up the state $\varrho_i$.

To prove that the languages of the EL-automaton $\AUTA$ and the constructed parity automaton $\AUTB$ are the same we will need the following lemma.
Intuitively, this lemma states that (due to the fact that every time a label is visited it is pushed to the left in the next permutation) for every run of $\AUTA$, some time after the labels that appear on the run only finitely often are no longer visited, a point is reached after which the labels that are visited infinitely often by the run always appear to the left of the labels that are not visited anymore; Furthermore, infinitely many times the pointer points at the right most element of the labels that appear infinitely often.

\begin{lemma}\label{lem:display and separate}
  Given a run $\varrho$ of $\AUTB$, and the corresponding run $\rho$ of $\AUTA$, we have that:
  \begin{enumerate}
    \item There is an $i \in \mathbb{N}$ such that $\varrho_j$ displays $\inf_\lambda(\rho)$ and $h_j < |\inf_\lambda(\rho)|$, for every $j \geq i$.
    \item For every $n \in \mathbb{N}$ there is an $m > n$ such that $\varrho_m$ separates $\inf_\lambda(\rho)$.
  \end{enumerate}
\end{lemma}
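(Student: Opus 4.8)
The plan is to isolate one structural fact about how the permutation and pointer components of $\AUTB$ evolve, and then read off both parts of the lemma from it. Fix the run $\varrho = \varrho_0\varrho_1\cdots$ with $\varrho_j = (q_j,\pi_j,h_j)$, let $\rho = q_0q_1\cdots$ be the corresponding run of $\AUTA$, and write $L = \inf_\lambda(\rho)$ and $\ell = |L|$ (assuming $L \neq \emptyset$; see below). From the transition rule, $\pi_{j+1} = \pi_j \Lsh \lambda(q_j)$ and $h_{j+1} = maxPos(\lambda(q_j),\pi_j)$, so both the permutation and the pointer are driven by the labels of the \emph{source} state. The first thing I would record is the characterization $l \in L$ iff $l \in \lambda(q_j)$ for infinitely many $j$: one direction is immediate, and the other uses that $Q$ is finite, so some infinitely-recurring state carries $l$. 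In particular there is an $N$ with $\lambda(q_j) \subseteq L$ for all $j \geq N$; that is, after $N$ every label that gets pushed left belongs to $L$.

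The core fact is a monotonicity statement. For each fixed $l' \in \Gamma \setminus L$, let $S_j$ be the set of elements of $L$ lying to the left of $l'$ in $\pi_j$. For $j \geq N$ the push moves $\lambda(q_j) \subseteq L$ all the way left and preserves the relative order of the remaining elements (among which is $l'$); a short case analysis gives $S_{j+1} = S_j \cup \lambda(q_j) \supseteq S_j$. Since $S_j$ is non-decreasing and every $m \in L$ lies in $\lambda(q_j)$ for infinitely many $j$, each $m$ is eventually and permanently in $S_j$; as $L$ is finite, past some point $S_j = L$, i.e.\ all of $L$ sits to the left of $l'$. Taking the maximum of these finitely many thresholds over $l' \in \Gamma \setminus L$ yields $i_1$ such that $\pi_j$ \emph{displays} $L$ for all $j \geq i_1$. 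For the pointer bound, for $j \geq \max(N,i_1)$ we have $\lambda(q_j) \subseteq L$ and $\pi_j$ displays $L$, so every visited label occupies a position $\leq \ell-1$; hence $h_{j+1} = maxPos(\lambda(q_j),\pi_j) \leq \ell - 1 < \ell$. This proves part (1) with $i = \max(N,i_1)+1$.

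For part (2) I would track the element at position $\ell-1$, the rightmost $L$-slot once display holds. Fix $n$ and set $j_0 = \max(n,N,i_1)$; let $l^\ast = \pi_{j_0}(\ell-1) \in L$. As long as $l^\ast$ is not visited it is never pushed, and since the remaining $L$-elements keep their relative order under each push, $l^\ast$ stays at position $\ell-1$ and display of $L$ is preserved. Because $l^\ast \in L$, it is visited at some first step $m-1 \geq j_0$; there $l^\ast$ is still at position $\ell-1$ and $\lambda(q_{m-1}) \subseteq L$, so $maxPos(\lambda(q_{m-1}),\pi_{m-1}) = \ell-1$ and $h_m = \ell-1$. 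Since $m > n \geq i_1$, $\pi_m$ displays $L$ and $h_m = \ell-1$, i.e.\ $\varrho_m$ \emph{separates} $L$.

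The main conceptual obstacle is pinning down the monotonicity precisely — arguing that pushing only subsets of $L$ can never move an $L$-element from the left of a non-$L$ element back to its right — and then converting this ``eventual dominance'' into the exact displaying/separating statements while keeping straight that $h_{j+1}$ is governed by $q_j,\pi_j$ rather than by the target state. The remaining care is index bookkeeping and the degenerate cases where a visited state carries no label (an empty push leaves $\pi$ unchanged, and the fixed convention for $maxPos(\emptyset,\pi)$ keeps the pointer bound valid) and where $L = \emptyset$ (for which part (1)'s bound $h_j < 0$ is unattainable, so this case is excluded here and treated separately in the proof of Theorem~\ref{thm:EL to parity}).
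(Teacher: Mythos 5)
Your proof is correct and follows essentially the same route as the paper's: show that eventually only labels in $\inf_\lambda(\rho)$ are pushed, deduce that they permanently occupy the leftmost positions (you via a monotone set of $L$-elements sitting left of each fixed non-$L$ label, the paper via a pairwise ``$x$ overtakes $y$ after $y$'s last visit'' argument), bound the pointer using the fact that the update is driven by the source state, and for part (2) track the element at position $|\inf_\lambda(\rho)|-1$ until it is first visited. Your explicit flagging of the degenerate cases $\inf_\lambda(\rho)=\emptyset$ and $\lambda(q_j)=\emptyset$ is a point of care that the paper's proof silently omits.
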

\begin{proof}
    To see the first item, given $x \in \inf_\lambda(\rho)$ and $y \in \Gamma \setminus \inf_\lambda(\rho)$, let $t \in \mathbb{N}$ be the last time $y$ is visited on $\rho$, and let $l > t$ be some later time where $x$ is visited on $\rho$. By the definition of $\delta'$, the transition from $\varrho_l$ to $\varrho_{l+1}$ pushes $x$ to the left of every state not visited by $q_l$ and thus, in particular, to the left of $y$, i.e., $\pi_{l+1}^{-1}(y) > \pi_{l+1}^{-1}(x)$. Furthermore, for every $l' > l$ we have that $x$ remains to the left of $y$ in $\pi_{l'}$ (indeed, the only way for $y$ to possibly move to the left of $x$ is if $y$ is visited between time $l$ and $l'$, but by our choice of $l$ this cannot happen). Since such an $l$ exists for every pair $x \in \inf_\lambda(\rho)$ and $y \in \Gamma \setminus \inf_\lambda(\rho)$, and since there only finitely many such pairs, we conclude that by taking a large enough $i$ we get that for every $j \geq i$ we have ($\dagger$) (i) $\lambda(q_j) \subseteq \inf_\lambda(\rho)$ and (ii) $\pi_j$ has the labels in $\inf_\lambda(\rho)$ appear to the left of the labels in $\Gamma \setminus \inf_\lambda(\rho)$.
    It remains to show the second part of the first item. Assume by contradiction that for some $j > i$ we have that $h_j \geq |\inf_\lambda(\rho)|$. Recall that by our assumption $j-1 \geq i$ and thus, the value of $h_j$ and the second part of $\dagger$ imply that $\pi_{j-1}(h_j) \not \in \inf_\lambda(\rho)$. But this is a contradiction to the first part of $\dagger$ since, by the definition of $\delta'$, we know that $q_{j-1}$ visits $\pi_{j-1}(h_j)$.

    To see the second item of the lemma, let $i$ be as in the first item of the lemma, and observe that we can assume w.l.o.g. that $n > i$. Let $h = |\inf_\lambda(\rho)|-1$, and let $x = \pi_n(h)$. Note that, $n > i$ implies, by the first item of the lemma, that $x \in \inf_\lambda(\rho)$. Let $m > n$ be such that $q_m$ visits $x$. There are two cases: either $q_j$ visits $\pi_j(h)$ for some $n \leq j < m$ or not. In the first case we are done since this implies that $h_{j+1} \geq h$ and thus, by the first item of the lemma, it must be that $\varrho_{j+1}$ separates $\inf_\lambda(\rho)$.
    For the case that $q_j$ does not visit $\pi_j(h)$ for every $n \leq j < m$, observe that by the first item of the lemma we have that $q_j$ also does not visit any of the labels that $\pi_j$ maps to positions larger than $h$, since these are all outside $\inf_\lambda(\rho)$, and are thus not visited by states after $\rho_i$ (recall that we assume that $n > i$). Hence, by the definition of $\delta'$, it must be that $\pi_{j+1}(l) = \pi_j(l)$ for every $l \in [h, k-1]$, and every $n \leq j < m$ (i.e., the left shifts that transform $\pi_j$ to $\pi_{j+1}$ only involve the labels in positions $0$ to $h-1$). Recall that $\pi_n(h) = x$ and conclude that also $\pi_m(h) = x$. Since $q_m$ visits $x$, and by the first item of the lemma $h_m \leq h$, it must be that $h_m = h$. Combining this with the fact that (again by the first item) $\varrho_m$ displays $\inf_\lambda(\rho)$, we get that $\varrho_m$ separates $\inf_\lambda(\rho)$, as promised.
\end{proof}

Lemma~\ref{lem:display and separate} directly yields the following corollary:
\begin{corollary}\label{cor:infinity set separated}
  If $\varrho$ is a run of $\AUTB$, and $\rho$ is the corresponding run of $\AUTA$, then: (i) $\inf_\lambda(\rho)$ is separated by infinitely many states in $\varrho$; (ii) every set $Y$ separated by infinitely many states in $\varrho$  satisfies $Y \subseteq \inf_\lambda(\rho)$.
\end{corollary}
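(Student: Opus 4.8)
The plan is to derive the corollary directly from the two parts of Lemma~\ref{lem:display and separate}, handling each item of the corollary separately and doing essentially no new work beyond unwinding the definitions of ``displays'' and ``separates.''

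For item (i), I would observe that the second part of Lemma~\ref{lem:display and separate} is, verbatim, the statement that infinitely many states of $\varrho$ separate $\inf_\lambda(\rho)$: it says that beyond every index $n$ there is a strictly later index $m$ at which $\varrho_m$ separates $\inf_\lambda(\rho)$, and a property that holds cofinally along $\varrho$ is exactly a property that holds infinitely often. So item (i) needs no further argument.

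For item (ii), I would first invoke the first part of Lemma~\ref{lem:display and separate} to fix a threshold $i$ such that for every $j \geq i$ the state $\varrho_j$ displays $\inf_\lambda(\rho)$ and its pointer satisfies $h_j < |\inf_\lambda(\rho)|$. Now suppose $Y$ is separated by infinitely many states of $\varrho$. Since only finitely many such states can occur before index $i$, I can choose an index $m \geq i$ at which $\varrho_m$ separates $Y$. By definition of ``separates,'' the permutation $\pi_m$ displays $Y$ and $h_m = |Y|-1$, so $Y = \{\pi_m(0),\ldots,\pi_m(h_m)\}$ occupies exactly the initial positions $0$ through $|Y|-1$ of $\pi_m$. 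Because $m \geq i$, the \emph{same} permutation $\pi_m$ also displays $\inf_\lambda(\rho)$, so $\inf_\lambda(\rho)$ occupies exactly the initial positions $0$ through $|\inf_\lambda(\rho)|-1$, and in addition $h_m < |\inf_\lambda(\rho)|$. The key step is then to compare these two initial segments of the single permutation $\pi_m$: since $|Y|-1 = h_m < |\inf_\lambda(\rho)|$, the positions occupied by $Y$ are contained in those occupied by $\inf_\lambda(\rho)$, which gives $Y \subseteq \inf_\lambda(\rho)$.

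There is no real obstacle here, consistent with the phrase ``directly yields.'' The only point that requires a little care is the synchronization in item (ii): one must pick the witnessing state for $Y$ at an index \emph{beyond} the threshold $i$ supplied by the first part of the lemma, so that a single permutation $\pi_m$ simultaneously displays both $Y$ and $\inf_\lambda(\rho)$; the desired containment then falls out of comparing the two prefixes of $\pi_m$ using the pointer bound $h_m < |\inf_\lambda(\rho)|$.
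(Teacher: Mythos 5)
Your proposal is correct and matches the paper, which offers no separate argument beyond saying the corollary follows directly from Lemma~\ref{lem:display and separate}; your unwinding of item (i) from the lemma's second part, and of item (ii) by choosing the witnessing index $m$ beyond the threshold $i$ so that the single permutation $\pi_m$ displays both $Y$ and $\inf_\lambda(\rho)$ with $h_m = |Y|-1 < |\inf_\lambda(\rho)|$, is exactly the intended reasoning. The synchronization point you flag is indeed the only step requiring care, and you handle it correctly.
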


We are now ready to prove Theorem~\ref{thm:EL to parity}.
\begin{proof}[Proof of Theorem~\ref{thm:EL to parity}]
  It is easy to see that the number of states and colors of the parity automaton $\AUTB$ is as stated. To see that it accepts the same language as $\AUTA$, take a run $\varrho$ of $\AUTB$ and let $\rho$ be the corresponding run of $\AUTA$. Let $z$ be such that $d \in [2z-1, 2z]$ is the highest color that appears infinitely often on $\varrho$. By the definition of the coloring function of $\AUTB$, it follows that (i) there is a set $X \subseteq \Gamma$ of size $z$ that is separated infinitely often by states on $\varrho$, and there is no set of larger size that is separated infinitely often by states on $\varrho$; (ii) $B(X) = \true$ if $d = 2z$, and $B(X) = \false$ if $d = 2z-1$. Note that, by Corollary~\ref{cor:infinity set separated}, it must be that $X = \inf_\lambda(\rho)$. Hence, $\rho$ is an accepting run iff $d$ is even, i.e., iff $\varrho$ is an accepting run.
\end{proof}

We remark that while we gave a construction for deterministic automata, the same construction with the trivial natural modifications will work for other types of transition systems (nondeterministic, alternating, etc.), preserving the type of transition system while converting the acceptance condition.


\section{Supplement to ``Reasoning''}


We begin with a remark the gives some standard relationships between satisfiability, validity, and model checking.

\begin{remark} \label{rem: relationships} 
A formula is valid iff its negation is not satisfiable. Thus, since the complexity classes in Theorems~\ref{thm: synthesis complexity} and~\ref{thm: satisfiability et al} are deterministic, and since \LTLfplus and \PPLTLplus are each closed under negation, validity has the same complexity as satisfiability. Also,  model checking is at least as hard as validity, since $\Psi$ is valid iff $N \models \Psi$ where $N$ is a "complete" transition system that generates all possible traces.
\end{remark}

We now discuss the lower bounds in Theorems~\ref{thm: satisfiability et al}.

\begin{enumerate}
    \item Satisfiability is EXPSPACE-hard already for the fragment of \LTLfplus consisting of formulas of the form $\safe{\Phi}$. The proof of this follows a standard encoding, cf.~\cite{DBLP:conf/stoc/VardiS85,kupferman2000automata,DBLP:conf/atva/BansalLTVW23}, see Theorem~\ref{thm: safe LTLf satisfiability is expspace-hard} below.

    \item  Model checking is EXPSPACE-hard already for the fragment of \LTLfplus consisting of formulas of the form $\guar{\Phi}$. This follows from Remark~\ref{rem: relationships} and the fact that validity for this fragment is EXPSPACE-hard; this can also be proven directly~\cite{DBLP:conf/atva/BansalLTVW23}.
    
    \item Satisfiability is PSPACE-hard already for the fragment \PPLTLplus consisting of formulas of the form $\guar{\Phi}$. 
    We reduce \LTLf satisfiability, a known PSPACE-hard problem~\cite{DegVa13}, to \PPLTL satisfiability, and then to \PPLTLplus satisfiability. For an \LTLf formula $\Phi$, let $\overleftarrow{\Phi}$ be the \PPLTL formula formed by syntactically replacing every future operator by its past variant, i.e., replacing $\LTLX$ by $\LTLY$, and $\LTLU$ by $\LTLS$. Then, $\Phi$ and $\overleftarrow{\Phi}$ are equisatisfiable: a finite trace $u = u_0 \cdots u_n$ is a model of $\Phi$ iff its reverse $u_n u_{n-1} \cdots u_0$ is a model of $\overleftarrow{\Phi}$. Thus, also $\overleftarrow{\Phi}$ and $\guar{\overleftarrow{\Phi}}$ are equisatisfiable (extend a model of $\overleftarrow{\Phi}$ arbitrarily to get a model of $\guar{\overleftarrow{\Phi}}$, the other direction is by definition of $\guarSymbol$). 

    \item  Model checking is PSPACE-hard already for the fragment of \PPLTLplus consisting of formulas of the form $\safe{\Phi}$. This follows from Remark~\ref{rem: relationships} (that model-checking is at least as hard as validity), and the fact that the previous item implies that validity of \PPLTLplus formulas of the form $\safe{\Phi}$ is PSPACE-hard. 
\end{enumerate}

\subsection{Solution techniques for reasoning}

We continue where the body left off and show how to perform $(\dagger)$ (a) and (b). First we show how to do this for \LTLfplus, and then for \PPLTLplus.

To handle an \LTLfplus formula $\Psi$ (assumed to be in positive normal-form), we convert each finite-trace formula $\Phi_i$ into an equivalent NFA $\AUTB_i$ of exponential size using Theorem~\ref{thm:logic to FA}. We do not convert the NFA into an equivalent DFA $\AUTA_i$ as in Step 1 of the Synthesis Algorithm, since it may be of double-exponential size. Instead, we run the ``on-the-fly" algorithm described in the body. Note that a state $s$ of $D$ is of the form $(q_1,\cdots,q_n)$ where each $q_i$ is a state in the DFA $\AUTA_i$, which itself is a subset of the set of states of the NFA $\AUTB_i$. Thus, storing a single state $s$ of $D$ can be done with linearly many states of the NFAs, which uses exponential space in the size of $\Psi$. Also, storing the labels of a state can be done with $k = O(|\Psi|)$ bits. To see that (a) of $(\dagger)$ is satisfied we just need to identify if a state $q_i$ in $\AUTA_i$ is a final state. This can be done efficiently since $q_i$ is a final state of $\AUTA_i$ iff it contains a final state of the NFA $\AUTB_i$. To see that (b) of $(\dagger)$ is satisfied observe that to decide if there is an $a$-labeled transition in $D$ from $s = (q_1,\cdots,q_n)$ to $s' = (q'_1,\cdots,q'_n)$ it is enough to check that, for each $i$, there is an $a$-labeled transition in $D_i$ from $q_i$ to $q'_i$. The latter can be efficiently checked by consulting the NFA's transition relation as in Remark~\ref{rem: LTLf to DFA}: for every NFA state $s'$, check that $s' \in q'_i$ iff there is an NFA transition $(s,a,s')$ for some $s \in q_i$.

To handle a \PPLTLplus formula $\Psi$ (assumed to be in positive normal-form), we do not convert each $\Phi_i$ into an equivalent DFA $\AUTA_i$ as in Step 1 of the Synthesis Algorithm. Instead, we run the ``on-the-fly'' algorithm described above. Note that a state $s$ of $D$ is of the form $(q_1,\cdots,q_n)$ where each $q_i$ is a state of the DFA $\AUTA_i$, and by Remark~\ref{rem: PP to DFA} each state $q_i$ is a set of subformulas of $\Phi_i$. Thus, each state of $D$ can be stored with a polynomial number of bits. Also, storing the labels of a state can be stored with $k = O(|\Psi|)$ bits. Closely following the case of \LTLfplus above, to show $(\dagger)$ we just need to efficiently (1) 
identify if a state $q_i$ in $\AUTA_i$ is final, and (2) check if there is an $a$-labeled transition from $q_i$ to $q'_i$ in $\AUTA_i$. This is true for (1) since $q_i$ is a final state iff it contains $\Phi_i$; and is true for (2) since we can easily check, by consulting $\Phi_i$, that a subformula $s'$ of $\Phi_i$ is in $q'_i$ iff $q_i$ yields $s'$ under $a$ in one step, by recursive evaluation, as in Remark~\ref{rem: PP to DFA}.

This establishes that satisfiability for \LTLfplus (resp. \PPLTLplus) is in NEXPSPACE (resp. NPSPACE). Now use the fact that PSPACE=NPSPACE and EXPSPACE=NEXPSPACE~\cite{Savitch70} to get the upper bounds in Theorem~\ref{thm: satisfiability et al}.



\subsubsection{Model Checking} 
We outline the algorithms for model checking. 
Given a nondeterministic transition system $T$ and a \LTLfplus/\PPLTLplus formula $\Psi$, we have that $T$ satisfies $\Psi$ iff it is not the case that there is an infinite trace generated by $T$ that satisfies $\lnot \Psi$. Thus, since deterministic complexity classes are closed under complement, it is enough to provide an optimal algorithm for the following problem: given a nondeterministic transition system $T$ and a \LTLfplus/\PPLTLplus formula $\Psi$, decide if there is an infinite trace generated by $T$ that satisfies $\Psi$. 
To do this we proceed as in the ``on-the-fly'' algorithm for  satisfiability. However, we add to the current state of $D$ a component consisting of the current state of $T$. Thus, we store $(q_1,\cdots,q_n,t)$. To check if there is a transition from 
$(q_1,\cdots,q_n,t)$ to $(q'_1,\cdots,q'_n,t')$ we guess an input letter $a$, check there is a transition from $(q_1,\cdots,q_n)$ to $(q'_1,\cdots,q'_n)$ on $a$ and a transition from $t$ to $t'$ on $a$. Since $T$ is part of the input, this only uses an additional space that is linear in the size of $T$. Thus, this algorithm decides if there is an infinite trace generated by $T$ whose induced run in $D$ satisfies the EL-condition, i.e., by Proposition~\ref{prop: product}, it decides if there is an infinite trace generated by $T$ that satisfies $\Psi$.

\subsection{Proof that Satisfiability for \LTLfplus  is EXPSPACE-hard} 

Because it does not seem to explicitly appear in the literature, for completeness we supply a proof of the following:

\begin{theorem} \label{thm: safe LTLf satisfiability is expspace-hard}
The satisfiability problem for the fragment of \LTLfplus consisting of formulas of the form $\safe{\Phi}$ is EXPSPACE-hard.
\end{theorem}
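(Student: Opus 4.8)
The plan is to reduce from the \emph{infinite corridor tiling problem of exponential width}, which is EXPSPACE-complete (a standard consequence of encoding exponential-space Turing machines, cf.\ the cited works). An instance gives a finite tile set $\mathcal{T}$, horizontal and vertical compatibility relations $H,V \subseteq \mathcal{T}\times\mathcal{T}$, a width parameter $n$ in unary (so the corridor has width $2^n$), and a succinctly described first row; the question is whether the half-grid $[2^n]\times\mathbb{N}$ admits a tiling respecting $H$ horizontally, $V$ vertically, and the first-row condition. From such an instance I will build in polynomial time an \LTLf formula $\Phi$ (over $n$ atoms $b_0,\dots,b_{n-1}$ for a column counter plus $O(\log|\mathcal{T}|)$ atoms for the tile) such that $\safe{\Phi}$ is satisfiable iff the tiling exists. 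Since $\safe{\Phi}$-satisfiability for \LTLfplus is exactly the problem in question, EXPSPACE-hardness follows.

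The intended model is the infinite trace that lists the tiling cell by cell, row by row: position $i$ encodes the cell in column $(i \bmod 2^n)$ of row $\lfloor i/2^n\rfloor$, carrying its $n$-bit column address in $b_0,\dots,b_{n-1}$ and its tile. Three families of constraints are \emph{local} and are easy to write as polynomial-size \LTLf formulas that $\safe{}$ turns into prefix-invariants: (i) the counter starts at $0$ and increments by $1$ modulo $2^n$ at each step (a two-position condition asserted at all positions of the prefix via $\lnot\LTLF\lnot(\cdots)$); (ii) the first $2^n$ cells satisfy the first-row condition, anchored at position $0$ and using the counter to detect the first wrap; (iii) horizontally adjacent cells (consecutive positions whose counter does not wrap) are $H$-compatible. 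Each is of size $O(n+|\mathcal{T}|)$.

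The only remaining requirement is the \textbf{vertical constraint}: the cell in column $a$ of row $r$ and the cell in column $a$ of row $r+1$ --- which lie $2^n$ positions apart --- must be $V$-compatible. This is the crux, and is precisely where the power of $\safe{}$ beyond plain \LTL is used. The key idiom exploits that $\Phi$ is evaluated on \emph{every} prefix and can refer to that prefix's last position through $\last$: for any bit $b_k$, the subformula $\LTLF(\last \land b_k)$ is true at \emph{every} position of the prefix exactly when the last cell carries $b_k$. Hence $\mathit{eqEnd} := \bigwedge_{k<n}\big(b_k \leftrightarrow \LTLF(\last \land b_k)\big)$ holds at a position $p$ iff $p$'s column address equals that of the prefix's last cell; and since the counter repeats with period $2^n$, the latest $p$ before the end satisfying $\mathit{eqEnd}$ is exactly the cell one row directly below the last cell. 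I therefore let $\Phi$ assert, for each forbidden pair $(t,t')\notin V$, that the prefix does not end in tile $t'$ while its latest earlier equal-column cell carries $t$ (``latest'' pinned by requiring no $\mathit{eqEnd}$-position strictly in between). As the prefix endpoint ranges over all cells, this single $O(n+|\mathcal{T}|^2)$-size conjunct enforces $V$-compatibility at every cell of the grid.

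Correctness then follows: a valid tiling yields an infinite trace all of whose prefixes satisfy $\Phi$, and conversely any model of $\safe{\Phi}$ decodes to a valid tiling, since the counter constraints force the cell/row structure and the three local families plus the vertical conjunct force exactly $H$, $V$, and the first row. I expect the main obstacle to be the vertical gadget, i.e.\ making the comparison of two cells $2^n$ apart \emph{succinct}. The essential trick --- broadcasting the endpoint's address to all earlier positions via $\LTLF(\last \land \cdot)$ --- is available only because $\safe{}$ quantifies universally over finite prefixes endowed with an end marker; a plain $\LTLG$ over an \LTLf body cannot reach an exponentially distant position with a polynomial formula, which is also why ordinary \LTL satisfiability stays in PSPACE. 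Some routine bookkeeping remains (the vacuous first-row case, guarding the $\LTLX$ at the prefix end, and uniqueness of the ``latest'' cell), but these are standard.
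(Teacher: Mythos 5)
Your proposal is correct and is essentially the paper's own argument in tiling clothing: the paper reduces directly from acceptance of exponential-space Turing machines (encoding configurations as sequences of $2^{cn}$ address-tagged cells and forbidding the reject state), which is exactly the computation your exponential-width corridor tiling instance repackages. The crux is identical in both --- comparing two cells that are exponentially far apart by anchoring one at the prefix's end and letting $\safe{}$ range over all endpoints --- your $\bigwedge_{k<n}\bigl(b_k \leftrightarrow \LTLF(\last \land b_k)\bigr)$ being the same gadget as the paper's $\bigwedge_{i}(\LTLX^i 1) \leftrightarrow \LTLF(1 \land \LTLX^{m-i}\last)$ for matching a cell's index against that of a cell at a fixed offset from the end.
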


The proof uses an encoding from~\cite{DBLP:conf/stoc/VardiS85}, also found in \cite{kupferman2000automata,DBLP:conf/atva/BansalLTVW23}.
Recall that EXPSPACE is exactly the class of languages accepted by deterministic Turing Machines with space bound $2^{cn}$ for non-zero constant $c$.  Let $T$ be such a Turing Machine (TM), say with tape alphabet $\Gamma$, and state set $Q$. Let $x$ be an input word, say of length $n$. For technical convenience, we assume $c \geq 2$ to ensure that even if $n=1$ each configuration consists of at least $4$ tape cells. We will construct an \LTLf formula $\Phi$ of length polynomial in $n$ such that the TM accepts $x$ iff $\safe{\Phi}$ is satisfiable.\footnote{Recall that for such a reduction we consider the size of the TM as a constant, and only consider the size of the formula with respect to the size of the input word $x$.} As usual, we assume that the TM halts on every input, in either the halt-reject state or the halt-accept state. We interpret a halting configuration as if it has a transition that does nothing.

The \LTLf formula $\Phi$ we build will be such that an infinite trace $\tau$ will satisfy $\safe{\Phi}$ iff $\tau =  \tau' v^\omega$ where $\tau'$ is the encoding of an accepting run of the TM on $x$, and $v$ is the encoding of the last configuration of that run. We now describe the encoding. Define a new alphabet $\Sigma = \Gamma \cup (\Gamma \times Q) \cup \{0,1\} \cup \{\#,\$\}$. We encode a cell (of the TMs tape) by a word of length $cn+2$ in $\{0,1\}^{cn} (\Gamma \cup (\Gamma \times Q)) \{\#,\$\}$: the first $cn$ characters encode in binary (least significant bit first) the position, aka \emph{index}, of the cell within the tape (thus, in a configuration, the first index represents the number $0$, and the last index represents the number $2^{cn}-1$); the next character encodes the \emph{contents} of the cell, including whether or not the TM head is there, and if so, also the current state; and the final character is the \emph{cell-delimiter} $\#$ if the index of the cell is less than $2^{cn}-1$, and is the \emph{configuration-delimiter} $\$$ otherwise. A configuration is encoded by a concatenation of $2^{cn}$ cell encodings.
A run, or a partial run, is encoded by the configuration-delimeter $\$$ followed by a concatenation of configuration-encodings.




We now describe $\Phi$, mainly in words --- translating this description into an \LTLf formula is straightforward and standard, e.g.,~\cite{kupferman2000automata,DBLP:conf/atva/BansalLTVW23}. For $m \geq 0$, let $\LTLX^m$ be a shorthand for $\LTLX \LTLX \cdots \LTLX$ where the number of operators is $m$. To express, of a finite trace, that there
are at least $m$ more symbols before the end, one can write $\LTLX^m \true$. To express of a finite trace that there are exactly $m$ more symbols before the end, one can write $\LTLX^m \last$ where $\last$ is shorthand for $\lnot \LTLX \true$.

The formula $\Phi$ will be the conjunction of the following:
\begin{enumerate}
\item ``The first symbol is $\$$";

\item ``Always, if the current symbol is a delimiter, and there are at least $cn+2$ more symbols, then the next $cn+2$ symbols are of the from $\{0,1\}^{cn} (\Gamma \cup (\Gamma \times Q)) \{\#, \$\}$'';

\item ``Always, if the current symbol is a delimiter, and  there is another delimiter in the future, then the current cell ends with $\$$ if its index is $2^{cn}-1$, and with $\#$ otherwise'';

\item ``Always, if the current symbol is $\$$, and there is another delimiter later on, then the index of the current cell is $0$'';

\item ``Always, if the current symbol is a delimiter, and if there are two delimiters in the future, then the index of the current cell plus $1$ modulo $2^{cn}$ is the index of the next cell''; 

\item[] A naive formula would disjunct over all $cn$ bitstrings and check that the next index is a $cn$ bitstring representing the successor. However, there are $2^{cn}$ such bitstrings, so the size of such a formula would be exponential in $n$. Instead, one can check for an increment with a polynomial sized formula that expresses the following: consider the largest block of $1$s in the current index (if any), flip all these to $0$ in the next index, and the immediately following $0$ in the current index (if any) flips to $1$ in the next index. 

\item ``If there is at least one $\$$ after the initial $\$$, then the content of the first cell contains the head at the initial state, and after that there is no cell with the head until a $\$$, and the content of the first $cn$ cells contain $x$, and the contents of all the cells after that until a $\$$ encode empty cells'';

\item ``Always, if the current symbol is a delimiter, and  the word ends with a delimiter, and there is a $\$$ between these two delimiters, and the index of the third cell from the end of the word is equal to the index of the current cell, then the current triple of cell contents constrains the last triple of cell contents according to the TM''; 

\item[] First, we remark that a triple $t$ of adjacent cells constrains the triple $t'$ of cells with in the same position in the next configuration as follows: if the head appears $t$ then the triple $t'$ is determined according to the TM's transition function; if the head does not appear in $t$ then every cell $t$ should have its content copied to the corresponding cell in $t'$ (with the same index) if one can deduce that it is not adjacent to the head (i.e., if it is the middle cell, or it is the first two cells and the triple starts at position $0$, or it is the last two cells and the triple ends at position $2^{cn}-1$). Note that even though there are exponentially many (in $n$) possible triples of cells, the content of a triple of cells has constant size, it consists of three symbols from $\Sigma$.

\item[] Second, we remark that one can compare two indices for equality with a polynomially-sized formula provided the second index starts a constant number $m \geq cn$ of positions before the end of the string. This is done by the formula:
\[
\bigwedge_{i \in [0,cn-1]} (\LTLX^i 1) \leftrightarrow \eventually (1 \land \LTLX^{m-i} \last).
\] 


\item ``Always, the current symbol does not include the halt-reject state''.
\end{enumerate}


To see that this is correct, we argue as follows. 
Let $\tau'$ be an encoding of the run of the TM on $x$, and let $v$ be its last configuration. If the TM accepts $x$ then, it is not hard to see that $\tau = \tau'v^\omega$ satisfies $\safe{\Psi}$. On the other hand, if the TM rejects $x$, then if $\tau = \tau'v^\omega$, then for a long enough prefix of $\tau$, we have that $\safe{\Psi}$ would not be satisfied by item 8; and if $\tau$ is not equal to $\tau'v^\omega$, i.e., it doesn't faithfully encode the run of the TM on $x$ (extended by copying the last configuration forever), then this violation of the encoding will be detected on a long enough prefix by one of the other seven items, and thus $\tau$ would not satisfy $\safe{\Psi}$.

\end{document}